\documentclass[a4paper,11pt]{article}

\usepackage[T1]{fontenc}
\usepackage[utf8]{inputenc}
\usepackage{fullpage}
\usepackage{times}
\usepackage{microtype}
\usepackage{amsmath,amssymb,amsthm,mathtools}
\usepackage{booktabs}
\usepackage{float}
\usepackage{enumitem}
\usepackage[small]{caption}
\usepackage{authblk}
\usepackage{natbib}
\usepackage[hidelinks]{hyperref}
\usepackage[capitalize,nameinlink]{cleveref}
\usepackage{xcolor}

\newtheorem{theorem}{Theorem}[section]

\newtheorem{proposition}[theorem]{Proposition}
\newtheorem{lemma}[theorem]{Lemma}

\theoremstyle{definition}
\newtheorem{definition}[theorem]{Definition}
\newtheorem{example}[theorem]{Example}

\allowdisplaybreaks
\usepackage{tikz}
\usepackage{pifont}
\usepackage{multirow,multicol}

\usetikzlibrary{shapes,decorations,arrows,calc,arrows.meta,fit,positioning}
\tikzset{
    -Latex,auto,node distance =1 cm and 1 cm,semithick,
    state/.style ={ellipse, draw, minimum width = 0.7 cm},
    point/.style = {circle, draw, inner sep=0.04cm,fill,node contents={}},
    bidirected/.style={Latex-Latex,dashed},
    el/.style = {inner sep=2pt, align=left, sloped}
}

\title{\bf Full Justified Representation in Temporal Voting: \\ Efficient Computation and Verification}

\author[]{Nicholas Teh}
\affil[]{University of Oxford, UK}

\date{\vspace{-10mm}}

\begin{document}

\maketitle
\begin{abstract}
In temporal voting, a fixed set of voters makes a collective decision in each of several rounds, and proportionality requires that groups with shared interests be represented across these decisions. Full justified representation (FJR) is among the strongest proportionality axioms known to be satisfiable in this setting, since it allows different members of a group to be satisfied by different selected candidates. However, whether an FJR outcome can be computed in polynomial time has remained open. We resolve this question by introducing temporal FJR+, a strengthening of both FJR and extended justified representation+ (EJR+) that can be computed and verified in polynomial time. The axiom measures a group's agreement in each round by the largest number of its members who approve a common candidate, and requires some member to attain the integer part of the group's proportional share of its best attainable average satisfaction, with a shortfall of at most one unit when this share is an integer. We show that verification reduces to counting approvals among the voters below each satisfaction threshold. For computation, we observe that local search for Proportional Approval Voting (PAV) can stop at an outcome violating FJR. Somewhat counterintuitively, the remedy is to subtract a small multiple of the minimum voter satisfaction from the PAV score: every FJR+ violation then admits a single-round change whose improvement guarantees polynomial running time. We further show that temporal FJR+ is strictly stronger than FJR and EJR+ combined, and coincides with lower quota on party-list preferences. Finally, temporal FJR+ and multiwinner FJR+ are incomparable in general, but coincide when approvals are static and each candidate-round pair is treated as a separate candidate.\end{abstract}

\section{Introduction}
\label{sec:introduction}

Suppose that an AI assistant is asked to plan a week of activities for a group of people. Each participant indicates which of the available activities she would enjoy on each day. A natural approach is to select, every day, the activity approved by the most participants. However, the same majority may then prevail every day, leaving some participants with nothing they enjoy all week. Considering the decisions jointly creates an opportunity for \emph{proportional representation}: a group with shared interests should receive a share of the programme commensurate with its size and with the extent of its agreement. Shared interests, however, need not mean identical preferences. Within a group of culturally inclined participants, some may prefer concerts and others exhibitions; a good programme gives each of them several activities they like, even if no single activity pleases them all.

\emph{Temporal voting} captures this kind of sequential decision making. A fixed electorate selects one candidate in each of several rounds, voters' approval preferences may change from round to round, and a voter's satisfaction is the number of selected candidates she approves. The ballots may be elicited directly or supplied by learned preference models.\footnote{\citet{chandak2026proportional} make the latter
connection concrete by aggregating votes from models trained on Moral
Machine responses from different countries.} We study the \emph{offline} setting, in which all approval ballots are available before any decision is made.

How should proportionality treat a group whose members agree only partially? The familiar representation axioms from multiwinner voting start from groups whose members approve common candidates \citep{aziz2017justified}. \emph{Full justified representation} (FJR), introduced by \citet{peters2021proportional} for participatory budgeting, is more permissive: it allows different members of a group to be satisfied by different parts of the collective choice. Its temporal version asks what every member of a group could obtain if the group controlled a number of rounds proportional to its size \citep{phillips2025strengthening}. The group specifies a set of rounds, and its benchmark is the satisfaction it can guarantee to all of its members within \emph{every} subset of these rounds of proportional size. The benchmark therefore reflects both the group's size and its opportunities for agreement, while preventing the group from reserving only its most favourable rounds. FJR then requires that at least one member of the group reach this benchmark in the selected outcome.

An outcome satisfying temporal FJR always exists. However, the known construction repeatedly searches over groups and possible decisions \citep{phillips2025strengthening} and does not yield a polynomial-time algorithm. Verification faces a similar obstacle, since the definition quantifies over all groups, all sets of rounds, and all possible decisions within them. In contrast, temporal extended justified representation+ (EJR+) is both efficiently computable and efficiently verifiable \citep{phillips2025strengthening}. This contrast raises our central question: \emph{Can we retain the flexibility of FJR while making both computation and verification efficient?}

We answer this question affirmatively by introducing \emph{temporal FJR+}, an axiom that implies both FJR and EJR+ and admits polynomial-time computation and verification.

\subsection{Our Contributions}
We work in the standard offline model of temporal voting, with approval ballots that may vary across rounds, and establish the following results.
\begin{itemize}
\item \textbf{A stronger representation guarantee.}
FJR+ is strictly stronger than FJR and EJR+ combined, even when every
voter approves exactly one candidate per round
(\cref{thm:implications,ex:strict}). On static party-list preferences,
it is equivalent to lower quota: each party wins at least the
integer part of its proportional share of the rounds (\cref{prop:party}).

\item \textbf{Polynomial-time verification.}
Although the axiom quantifies over all groups, it suffices to inspect,
for each satisfaction threshold, the group of all voters below that
threshold: enlarging a group preserves its cohesiveness and cannot decrease
its agreement counts. Verification therefore reduces to counting
approvals (\cref{thm:verification}), and a failed check identifies an
underrepresented group.

\item \textbf{Polynomial-time computation.}
We modify local search for Proportional Approval Voting (PAV) by
subtracting a small multiple of the minimum voter satisfaction from the
PAV score (\cref{thm:computation}). Like ordinary local search, the
rule changes one decision at a time, but the perturbation allows it to
leave outcomes at which ordinary local search can stop despite an FJR
violation. We show that every FJR+ violation admits an improvement in
the modified score that is large enough to ensure polynomial running
time.
\end{itemize}

For $n$ voters, $m$ candidates, and $\ell$ rounds, verification takes
$O((n+\ell)m\ell)$ time and computation takes
$O(n^2m\ell^5\log(\ell+1))$ time. Both results allow empty approval
ballots and require no divisibility assumptions.
Finally, we compare temporal FJR+ with multiwinner FJR+: neither implies
the other in general, but they coincide under static preferences when
each candidate-round pair is treated as a separate candidate
(\cref{sec:multiwinner}).

\subsection{Related Work}
\citet{lackner2020perpetual} introduced perpetual voting, and
\citet{lackner2023proportional} studied proportionality for online
decisions. \citet{bulteau2021justified} adapted justified representation
(JR) and proportional justified representation (PJR) to temporal voting,
and \citet{chandak2026proportional} established temporal representation
guarantees for several familiar voting rules, including local search
PAV. Our starting point is the pair of temporal FJR and EJR+ axioms of
\citet{phillips2025strengthening}. Our verification algorithm builds on
their observation that enlarging a group preserves cohesiveness, and our
computation refines single-round PAV comparisons to handle the equality
cases that the stronger guarantee requires. Efficient verification is a
substantive requirement in its own right: \citet{elkind2025verifying}
proved that verifying temporal JR, PJR, and extended justified
representation (EJR) is coNP-complete.

The offline model is a special case of public decision making with
additive utilities, in which rounds play the role of issues
\citep{conitzer2017fair,alouf2022better}. For binary issues,
\citet{skowron2022public} give proportionality guarantees to groups
whose members need not agree on every issue. Our axiom instead uses
the actual agreement counts in each round and allows more than two
candidates. \citet{masarik2024generalised} study proportionality under
general feasibility constraints; as noted by
\citet{phillips2025strengthening}, their FJR existence theorem also
implies the existence of temporal FJR outcomes. We provide
polynomial-time computation and verification for the stronger temporal
FJR+ axiom.

In multiwinner voting, \citet{brill2023robust} introduced EJR+ as an
efficiently verifiable strengthening of EJR, and recent work by
\citet{frank2026polynomial} and \citet{ai2026full} gives
polynomial-time algorithms for FJR. \citet{teh2026strengthening}
introduces multiwinner FJR+, which can be verified in polynomial time
through linear programming and computed through a rule based on voter
budgets. That axiom measures agreement through fractional voter and
candidate weights, whereas ours measures agreement within individual
rounds. Moreover, multiwinner algorithms do not directly produce valid
temporal outcomes: an unconstrained committee of candidate-round pairs
may contain several pairs from one round and none from another. In
participatory budgeting, \citet{neoh2026strengthening} obtain
polynomial-time computation of FJR up to one project for additive
utilities.

\citet{elkind2024temporalelections} study welfare and strategyproofness
together with an individual proportionality guarantee, under which each
voter receives at least the integer part of her maximum possible
satisfaction divided by the number of voters. Related individual
guarantees have been studied in online/temporal fair division, where goods
arrive over time and must be allocated irrevocably
\citep{choi2026tfdmm,choo2026approxproponline,cohen2026budgetconstraints,elkind2025temporal,neoh2026online,Goldberg2026,neoh2026closinggaps}.
Other temporal models concern changes between successive committees
\citep{zech2024multiwinnerchange} and the fair assignment of projects
to time slots \citep{elkind2022temporalslot}; see
\citet{elkind2024temporal} for a survey.

\section{Preliminaries}
\label{sec:preliminaries}

For a positive integer $k$, let $[k]=\{1,\ldots,k\}$. A
\emph{temporal election} is a tuple $E=(C,N,\ell,A)$, where $C$ is a
nonempty finite set of $m$ candidates, $N=[n]$ is a nonempty set of
voters, $\ell\geq1$ is the number of rounds, and
$A=(a_{i,r})_{i\in N,r\in[\ell]}$ collects the approval sets
$a_{i,r}\subseteq C$, which may be empty. Preferences are
\emph{static} if each voter approves the same set in every round. An
\emph{outcome} is a vector
$\mathbf{o}=(o_1,\ldots,o_\ell)\in C^\ell$ selecting one candidate per
round; the same candidate may be selected in several rounds. For
$R\subseteq[\ell]$, we write $\mathbf{o}_R$ for the restriction of
$\mathbf{o}$ to the rounds in $R$, and voter $i$'s satisfaction with it
is $\mathrm{sat}_i(\mathbf{o}_R)=|\{r\in R:o_r\in a_{i,r}\}|$. In
particular, satisfaction over the empty set of rounds is zero, and we
write $\mathrm{sat}_i(\mathbf{o})$ for satisfaction over all rounds.
Unless stated otherwise, all groups $S\subseteq N$ considered below are
nonempty.

Our measure of a group's agreement is defined round by round. For a
group $S\subseteq N$ and a round $r$, let
$h_r(S)=\max_{c\in C}|\{i\in S:c\in a_{i,r}\}|$ be the largest number
of members of $S$ who approve a common candidate in round $r$; we also
set $h_r(\varnothing)=0$. The same quantity appears in each round of
the EJR+ verification algorithm of \citet{phillips2025strengthening}.
Following that work, a group $S$ is \emph{$(\sigma,\tau)$-cohesive},
for $\sigma\in[n]$ and $\tau\in[\ell]$, if $h_r(S)\geq\sigma$ in at
least $\tau$ rounds. Note that the voters who agree may differ from
round to round.

\begin{samepage}
We use the following two representation axioms of
\citet{phillips2025strengthening}.

\begin{definition}[EJR+]
An outcome $\mathbf{o}$ satisfies \emph{extended justified
representation+} (EJR+) if, for every $(\sigma,\tau)$-cohesive group
$S\subseteq N$ and every round $r$ with
$\bigcap_{i\in S}a_{i,r}\neq\varnothing$, either (i) $\mathrm{sat}_i(\mathbf{o})\geq
\left\lfloor\frac{\sigma\tau}{n}\right\rfloor$ for some $i \in S$, or (ii) $o_r\in\bigcap_{i\in S}a_{i,r}$.
\end{definition}
\end{samepage}

In other words, as long as every member of a cohesive group remains
below the satisfaction level justified by the group's cohesiveness, the
outcome must select a candidate approved by all members in every round
in which such a candidate exists.

FJR instead measures what a group could guarantee to all of its members
by controlling its proportional share of the decisions. The group first
specifies a set of rounds $T$, of which it is entitled to
$\lfloor |T||S|/n\rfloor$. For each subset of $T$ of this size, the
group may choose one candidate per round so as to maximize the
satisfaction of its least satisfied member, and its benchmark is the
smallest value obtained over all such subsets. Formally, writing $C^R$
for the set of choices of one candidate for each round in $R$, let
\[
\mu_S(T)=
\min_{R\subseteq T, |R|=\lfloor |T||S|/n\rfloor}
\ \max_{\mathbf{x}_R\in C^R}
\ \min_{i\in S}\mathrm{sat}_i(\mathbf{x}_R).
\]

\begin{definition}[FJR]
An outcome $\mathbf{o}$ satisfies \emph{full justified representation}
(FJR) if, for every group $S\subseteq N$ and every
$T\subseteq[\ell]$, some voter $i\in S$ has
$\mathrm{sat}_i(\mathbf{o})\geq\mu_S(T)$.
\end{definition}

The minimum over $R$ is essential: a group cannot reserve only the
rounds most favourable to it. On the other hand, the group may choose
different candidates for different subsets $R$, and its members need
not approve the same selected candidates.

\section{Full Justified Representation\texorpdfstring{\,}{}+}
\label{sec:axiom}

FJR allows the members of a group to approve different selected
candidates, but it measures the group's entitlement by what all members
can attain simultaneously within every proportional-size subset of
rounds. Our axiom starts instead from the group's total attainable
satisfaction over all rounds. This total may be large even when it can
only be distributed unevenly among the members; as in FJR, we use it to
demand representation for at least one member.

Recall that in the counts $h_r(S)$, both the maximizing candidate and
the voters approving it may change between rounds. A
$(\sigma,\tau)$-cohesive group has at least $\sigma$ such approvals in
each of at least $\tau$ rounds, and hence at least $\sigma\tau$
approvals in total. Summing the actual counts captures more: it also
accounts for agreement beyond $\sigma$ in these rounds and for
agreement in the remaining rounds. Since choices in different rounds
are independent, the largest total satisfaction the group can attain is
$\sum_{r=1}^{\ell}h_r(S)
=\max_{\mathbf{x}\in C^\ell}\sum_{i\in S}\mathrm{sat}_i(\mathbf{x})$.
Dividing this quantity by $n$ gives it a proportional interpretation:
\begin{equation}
\frac{1}{n}\sum_{r=1}^{\ell}h_r(S)
=\frac{|S|}{n}\left(
\max_{\mathbf{x}\in C^\ell}
\frac{1}{|S|}\sum_{i\in S}\mathrm{sat}_i(\mathbf{x})\right).
\label{eq:proportional-average}
\end{equation}
That is, it equals the group's share of the electorate multiplied by
its best attainable average satisfaction.

For example, suppose that $n=10$ and a group of six voters has round
counts $(4,3,3,2)$. Its best total satisfaction is $12$, its best
average satisfaction is $2$, and its proportional share of this average
is $1.2$. Requiring some member to have satisfaction at least one thus
draws on the group's agreement in all four rounds. Cohesiveness alone does
not impose this requirement: the largest product $\sigma\tau$ for which
the group is $(\sigma,\tau)$-cohesive is $9<n$.
A natural requirement is therefore
\begin{equation}
\max_{i\in S}\mathrm{sat}_i(\mathbf{o})
\geq\left\lfloor\frac{1}{n}\sum_{r=1}^{\ell}h_r(S)\right\rfloor
\quad\text{for every nonempty }S\subseteq N.
\label{eq:stronger-total}
\end{equation}
We show in \cref{prop:integer-boundary} that this requirement can
always be satisfied, but whether a satisfying outcome can be computed
in polynomial time remains open. Our axiom, which we can compute
efficiently, requires some member to reach every integer level
strictly below the proportional average. It additionally retains the
levels justified by $(\sigma,\tau)$-cohesiveness. For the whole
electorate, there are no competing groups among which the rounds must
be divided, so the axiom also includes the integer part of its best
attainable average.

\begin{definition}[FJR+]
\label[definition]{def:fjrplus}
An outcome $\mathbf{o}$ satisfies \emph{full justified representation+}
(FJR+) if, for every group $S\subseteq N$ and every target satisfaction
level $\beta\in[\ell]$,
some voter $i\in S$ has $\mathrm{sat}_i(\mathbf{o})\geq\beta$
whenever at least one of the following holds:
\begin{enumerate}[label=(\roman*),leftmargin=*,itemsep=3pt]
\item $\sum_{r=1}^{\ell}h_r(S)>n\beta$;
\item $S$ is $(\sigma,\tau)$-cohesive for some
$\sigma\in[n]$ and $\tau\in[\ell]$ with $\sigma\tau\geq n\beta$;
\item $S=N$ and $\sum_{r=1}^{\ell}h_r(S)\geq n\beta$.
\end{enumerate}
\end{definition}

The three conditions differ only in the boundary case
$\sum_r h_r(S)=n\beta$. Indeed, $(\sigma,\tau)$-cohesiveness implies
$\sum_r h_r(S)\geq\sigma\tau$, so condition~(i) already applies
whenever $\sigma\tau>n\beta$. Conditions~(ii) and~(iii) therefore only
determine which integer levels are protected when the total is exactly
$n\beta$. In this case, condition~(ii) holds precisely when all
positive values $h_r(S)$ are equal: $\tau$ rounds each contribute
exactly $\sigma$, and every other round contributes zero. This
observation will make verification particularly simple.

Consequently, FJR+ imposes the same target as \eqref{eq:stronger-total}
whenever the proportional average is not an integer. When it is a
positive integer $\beta$, FJR+ still guarantees satisfaction at least
$\beta-1$, and conditions~(ii) and~(iii) may require $\beta$. The two
requirements thus differ by at most one unit, and only at exact
equality.

Two further features of the axiom are worth noting. First, it does not
require groups to be specified in advance, and it does not depend on
how the outcome was computed. Second, since the three conditions depend
only on the election, any change that weakly increases every voter's
satisfaction preserves FJR+. An outcome is \emph{Pareto efficient} if
no other outcome weakly increases every voter's satisfaction and
strictly increases the satisfaction of at least one voter; such a
change is a \emph{Pareto improvement}. Because there are finitely many
outcomes, every FJR+ outcome is weakly improved upon by some
Pareto-efficient outcome, which then satisfies FJR+ as well.

\subsection{Relation to FJR and EJR+}

We chose the integer levels in FJR+ so that they include every level
justified by temporal FJR. The key step converts FJR's universal
requirement over subsets of rounds into a lower bound on the group's
total attainable satisfaction. When this bound holds with equality,
the group has the cohesiveness required by condition~(ii) if $S\neq N$,
while condition~(iii) applies if $S=N$.

\begin{theorem}
\label{thm:implications}
Every outcome satisfying FJR+ satisfies both FJR and EJR+.
\end{theorem}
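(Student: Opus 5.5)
The plan is to treat the two implications separately. EJR+ follows almost directly from condition~(ii). For FJR, I would convert the benchmark $\mu_S(T)$ into a lower bound on $\sum_r h_r(S)$ by an averaging argument, and then analyse the equality case to land in condition~(ii) or~(iii).

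\emph{EJR+.} Let $S$ be $(\sigma,\tau)$-cohesive and set $\beta=\lfloor\sigma\tau/n\rfloor$. If $\beta=0$, clause~(i) of EJR+ holds trivially. Otherwise $\beta\le\sigma\tau/n\le\tau\le\ell$, so $\beta\in[\ell]$ and $\sigma\tau\ge n\beta$. Condition~(ii) of FJR+ then yields some $i\in S$ with $\mathrm{sat}_i(\mathbf{o})\ge\beta$, which is clause~(i) of EJR+. We never need the round $r$ or clause~(ii) of EJR+.

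\emph{FJR: the averaging bound.} Fix $S$ and $T$, and let $k=\lfloor |T||S|/n\rfloor$ and $\beta=\mu_S(T)$. We may assume $\beta\ge1$, which forces $k\ge1$ and $\beta\le k\le\ell$. By definition of $\mu_S(T)$, for every $R\subseteq T$ with $|R|=k$ there is $\mathbf{x}_R$ giving each member of $S$ satisfaction at least $\beta$ within $R$. Hence
\[
\sum_{r\in R}h_r(S)\;\ge\;\sum_{i\in S}\mathrm{sat}_i(\mathbf{x}_R)\;\ge\;|S|\beta .
\]
Averaging over all $k$-subsets $R$ of $T$, each round of $T$ appears with frequency $k/|T|$. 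This gives
\[
\sum_{r\in T}h_r(S)\;\ge\;\frac{|S||T|\beta}{k}\;\ge\;n\beta,
\]
where the last step uses $k\le |T||S|/n$. Since the $h_r$ are nonnegative, it follows that $\sum_{r=1}^{\ell}h_r(S)\ge n\beta$. If this inequality is strict, condition~(i) applies and we are done.

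\emph{FJR: the equality case.} The main step is to analyse what happens when $\sum_r h_r(S)=n\beta$ exactly. Then every inequality in the chain is tight:
\begin{itemize}[leftmargin=*]
\item $h_r(S)=0$ for all $r\notin T$;
\item $k=|T||S|/n$ exactly;
\item every $k$-subset $R\subseteq T$ has $\sum_{r\in R}h_r(S)=|S|\beta$, since the average equals the minimum.
\end{itemize}
If $k=|T|$, then $|S|=n$, so $S=N$ and condition~(iii) applies. Otherwise $1\le k<|T|$, and all $k$-subsets of $T$ have the same sum. Swapping a single element $r\in R$ for some $r'\in T\setminus R$ shows $h_r(S)=h_{r'}(S)$ for all $r,r'\in T$, so all these values equal a common $\sigma$. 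We have $\sigma\ge1$ because $\sigma|T|=n\beta>0$, and $\sigma\le|S|\le n$. Hence $S$ is $(\sigma,|T|)$-cohesive with $\sigma|T|=n\beta$, and condition~(ii) applies. In every case some $i\in S$ has $\mathrm{sat}_i(\mathbf{o})\ge\beta=\mu_S(T)$, which is FJR.

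The obstacle I expect is handling this boundary case carefully, in particular ensuring that $\beta\in[\ell]$ and that the tightness conditions genuinely force equal round counts rather than only equal averages.
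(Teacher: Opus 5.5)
Your proof is correct and follows essentially the same route as the paper. EJR+ comes from condition~(ii) with $\beta=\lfloor\sigma\tau/n\rfloor$, and FJR comes from the averaging bound over $k$-subsets of $T$, with the equality case resolved by the single-swap argument into condition~(ii) or~(iii); the only cosmetic difference is that you derive $S=N$ from $k=|T|$, whereas the paper disposes of $S=N$ via condition~(iii) first and then deduces $q<|T|$ from $S\neq N$.
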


\begin{proof}
For EJR+, let $S$ be $(\sigma,\tau)$-cohesive and set
$\beta=\lfloor\sigma\tau/n\rfloor$. If $\beta=0$, the requirement is
met trivially. Otherwise, condition~(ii) gives a voter in $S$ with
satisfaction at least $\beta$. Hence FJR+ always yields the
satisfaction alternative of EJR+, even when the group has no round of
complete agreement.

For FJR, fix $S$ and $T$, and write $q=\lfloor |T||S|/n\rfloor$. If
$\mu_S(T)=0$, there is nothing to show, so suppose that
$\beta=\mu_S(T)\geq1$; necessarily $1\leq\beta\leq q$. By the
definition of $\mu_S(T)$, every $q$-element subset $R\subseteq T$
admits a choice of candidates giving each member of $S$ at least
$\beta$ approvals. Therefore
\[
\sum_{r\in R}h_r(S)\geq\beta|S|
\quad\text{for every such }R.
\]
Each round in $T$ belongs to a fraction $q/|T|$ of these subsets, so
averaging these inequalities and using $q\leq|T||S|/n$ gives
\begin{equation}
\sum_{r=1}^{\ell}h_r(S)
\geq\sum_{r\in T}h_r(S)
\geq\frac{|T|\beta|S|}{q}
\geq n\beta.
\label{eq:fjr-average}
\end{equation}
If the first quantity exceeds $n\beta$, condition~(i) applies, and if
$S=N$, condition~(iii) applies. It remains to consider a group
$S\neq N$ with $\sum_r h_r(S)=n\beta$.

In this case, every inequality in \eqref{eq:fjr-average} holds with
equality; in particular, $|T||S|=nq$. The sums
$\sum_{r\in R}h_r(S)$ over all $q$-element subsets $R\subseteq T$
have average $\beta|S|$, and each of them is at least $\beta|S|$, so
every such sum equals $\beta|S|$. Since $S\neq N$, we have
$1\leq q<|T|$. Given two distinct rounds in $T$, choose $q-1$ other
rounds of $T$. Adjoining either of the two rounds yields a
$q$-element subset, and since the two subsets have the same sum, the
two rounds have equal values $h_r(S)$. Hence these values take a
common value $\sigma$ throughout $T$, which is a positive integer
because $\sigma|T|=n\beta$. It follows that $S$ is
$(\sigma,|T|)$-cohesive with $\sigma|T|=n\beta$, so condition~(ii)
applies. In every case, some voter in $S$ has satisfaction at least
$\mu_S(T)$, as required.
\end{proof}

Both boundary conditions are needed to retain the earlier guarantees.
To see that condition~(ii) is needed, consider two voters and two
rounds, where the first voter approves only $a$ and the second only
$b$ in both rounds. The outcome $(b,b)$ meets every requirement
imposed by conditions~(i) and~(iii), yet leaves the first voter with
nothing, although her proportional share is one round. This outcome
violates both FJR and EJR+, and condition~(ii) excludes it.

To see that condition~(iii) is needed, consider two voters and three
rounds, in which each voter approves exactly one candidate per round:
the first approves $b$, $a$, $b$ and the second approves $b$, $b$,
$a$ in rounds $1$, $2$, $3$. The outcome $(a,a,a)$ gives both voters
satisfaction one, whereas $(b,b,b)$ gives both satisfaction two;
hence $(a,a,a)$ violates FJR for $S=N$ and $T=[3]$. Nevertheless,
$(a,a,a)$ meets every requirement imposed by conditions~(i) and~(ii).
Each singleton $S$ has $\sum_r h_r(S)=3<2\cdot2$, while the whole
electorate has $(h_1(N),h_2(N),h_3(N))=(2,1,1)$, whose total
$4=2\cdot2$ is not exceeded, and whose largest cohesiveness product
$\sigma\tau$ is $3$. Condition~(iii) covers this boundary case even
though the agreement differs across rounds.

FJR+ can also demand representation for a group to which neither FJR
nor EJR+ guarantees anything. A group may have many approvals in total
without being able to distribute them so that every member benefits
within every proportional-size subset of rounds, and it may have no
candidate approved by all of its members. The following example
exhibits both phenomena.

\begin{example}
\label[example]{ex:strict}
There are five voters, three rounds, and three candidates. The table
lists the voters approving each candidate in each round; every voter
approves exactly one candidate per round.
\begin{center}
\begin{tabular}{cccc}
\toprule
Round & $a$ & $b$ & $c$\\
\midrule
1 & $\{1,2\}$ & $\{3,4\}$ & $\{5\}$\\
2 & $\{1,3\}$ & $\{2,4\}$ & $\{5\}$\\
3 & $\{1,4\}$ & $\{2,3\}$ & $\{5\}$\\
\bottomrule
\end{tabular}
\end{center}
The outcome $(c,c,c)$ gives satisfactions $(0,0,0,0,3)$. For
$S=\{1,2,3,4\}$, we have $h_r(S)=2$ in every round. Hence
$\sum_r h_r(S)=6>5$, and condition~(i) of FJR+ requires some member
of $S$ to have positive satisfaction. The outcome therefore violates
FJR+.

Nevertheless, $(c,c,c)$ satisfies FJR. A violating group cannot
contain voter~5, whose satisfaction already equals the number of
rounds, so consider groups among the other four voters. A singleton
has a proportional share of zero rounds in every set of rounds. A pair
is entitled to one round only if it specifies all three rounds, but
each pair agrees in only one of them; taking either of the other two
rounds as the one-round subset shows that the pair cannot guarantee
positive satisfaction to both members. A group of three is entitled to
at most one round, in which at most two of its members can be
satisfied. Finally, the group of all four voters cannot all be
satisfied within one round, and it is entitled to two rounds only if it
specifies all three. Two approving pairs from different rounds share
exactly one voter, because no pair is repeated in the table and the
two pairs in each round are disjoint. Hence two selected candidates
from different rounds satisfy at most three of the four voters, and
selecting $c$ does not help this group. Thus no group can guarantee
positive satisfaction to all of its members, as FJR would require.

The outcome also satisfies EJR+. A group containing voter~5 already
meets every possible satisfaction target. Among the other voters,
groups of three or four have no round in which all members approve a
common candidate, so EJR+ imposes no requirement on them. It remains to
consider singletons and pairs. A singleton is $(1,3)$-cohesive, while
a pair has round counts $(2,1,1)$ in some order. For either kind of
group, every product $\sigma\tau$ is at most $3<5$, so EJR+ requires
only satisfaction zero. We conclude that FJR+ is strictly stronger than
FJR and EJR+ combined.

The separating outcome is moreover Pareto efficient: preserving
voter~5's satisfaction requires selecting $c$ in every round, which
fixes the outcome.
\end{example}

\subsection{Party-List Preferences}

On party-list preferences, the agreement within every group is the same
in each round. In this case, the three conditions of FJR+ reduce to the
familiar lower-quota requirement.

\begin{proposition}
\label[proposition]{prop:party}
Suppose $N$ is partitioned into nonempty parties $N_1,\ldots,N_k$,
each with a distinct candidate, and every voter approves exactly her
party's candidate in every round. If $w_j$ is the number of rounds
won by party $j$, then an outcome satisfies FJR+ if and only if $w_j\geq\lfloor \ell|N_j| / n\rfloor$ for every $j \in [k]$.
\end{proposition}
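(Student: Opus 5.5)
We prove the two directions separately. Everything follows from computing $h_r(S)$ on party-list profiles. For a group $S$, write $s_j=|S\cap N_j|$. Since every voter approves exactly her party's candidate and the candidates are distinct, $h_r(S)=\max_j s_j=:s^\ast$ in every round. Hence $\sum_r h_r(S)=\ell s^\ast$, and $S$ is $(s^\ast,\ell)$-cohesive. Moreover, a member of $S\cap N_j$ has satisfaction exactly $w_j$, so $\max_{i\in S}\mathrm{sat}_i(\mathbf{o})=\max_{j:s_j>0}w_j$.

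\emph{Necessity.} Suppose the outcome satisfies FJR+, and take $S=N_j$ and $\beta=\lfloor \ell|N_j|/n\rfloor$. If $\beta=0$ there is nothing to show. Otherwise $N_j$ is $(|N_j|,\ell)$-cohesive with $|N_j|\ell\geq n\beta$, so condition~(ii) applies. It yields a member with satisfaction at least $\beta$, and every member has satisfaction $w_j$. Hence $w_j\geq\beta$.

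\emph{Sufficiency.} Assume $w_j\geq\lfloor\ell|N_j|/n\rfloor$ for all $j$. Fix $S$ and $\beta\in[\ell]$ such that one of (i)--(iii) holds, and let $j^\ast$ be a party attaining $s^\ast$.
\begin{itemize}
\item Under (i) or (ii), we get $\ell s^\ast\geq n\beta$. For (ii), note that $h_r(S)\geq\sigma$ in at least $\tau$ rounds forces $s^\ast\geq\sigma$, and $\tau\leq\ell$, so $\ell s^\ast\geq\sigma\tau\geq n\beta$. For (i) the inequality is even strict.
\item Since $s^\ast\leq|N_{j^\ast}|$, this gives $\lfloor \ell|N_{j^\ast}|/n\rfloor\geq\lfloor \ell s^\ast/n\rfloor\geq\beta$.
\item Therefore a member of $S\cap N_{j^\ast}$, which is nonempty because $s^\ast\geq1$, has satisfaction $w_{j^\ast}\geq\beta$.
\end{itemize}

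\emph{Condition (iii).} Here $S=N$ and $\ell\max_j|N_j|\geq n\beta$. Choosing $j^\ast$ as a largest party, the same argument applies verbatim: $\lfloor\ell|N_{j^\ast}|/n\rfloor\geq\beta$, so members of $N_{j^\ast}$ have satisfaction at least $\beta$.

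The only point needing care is that the mixed group $S$ is handled by its largest party fragment. That fragment's full party has a lower quota at least as large as $\lfloor \ell s^\ast/n\rfloor$, so no genuine obstacle remains. We also need $s^\ast\geq1$ so that the relevant member exists, which holds because $S$ is nonempty.
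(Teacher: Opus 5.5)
Your proof is correct and follows the paper's argument. Necessity applies condition~(ii) to the $(|N_j|,\ell)$-cohesive party $N_j$. Sufficiency uses that $h_r(S)$ equals the largest fragment $|S\cap N_{j^\ast}|$ in every round, so every level demanded by (i)--(iii) is at most $\lfloor \ell|N_{j^\ast}|/n\rfloor\leq w_{j^\ast}$; you just spell out more explicitly than the paper why each condition gives $\ell s^\ast\geq n\beta$.
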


\begin{proof}
Each party $N_j$ is $(|N_j|,\ell)$-cohesive, so condition~(ii) of FJR+
implies the stated lower quota. Conversely, suppose that every party
meets its lower quota, and fix a nonempty group $S$. Choose a party
$N_j$ maximizing $|S\cap N_j|$; since $S$ is nonempty and the parties
partition $N$, the set $S\cap N_j$ is nonempty. Then
$h_r(S)=|S\cap N_j|$ in every round, so every level $\beta$ required by
any of the three conditions is at most
$\lfloor\ell|S\cap N_j|/n\rfloor$. Every voter in $S\cap N_j$ has
satisfaction $w_j\geq\lfloor\ell|N_j|/n\rfloor$, which is at least
this level.
\end{proof}

\section{Polynomial-Time Verification}
\label{sec:verification}

A direct check of FJR+ would examine exponentially many groups. We
avoid this by following earlier verification algorithms that organize
voters by satisfaction thresholds: for EJR in approval-based
apportionment, where a candidate may receive several seats
\citep{brill2024approval}, for multiwinner EJR+
\citep{brill2023robust}, and for temporal EJR+
\citep{phillips2025strengthening}. For temporal FJR+, it suffices to
collect all voters below a satisfaction level, without fixing a
commonly approved candidate.

For each $\beta\in[\ell]$, let
$U_\beta=\{i\in N:\mathrm{sat}_i(\mathbf{o})<\beta\}$ be the set of
voters whose satisfaction is below $\beta$. Every group $S$ whose
members all have satisfaction below $\beta$ is a subset of $U_\beta$.
Enlarging a group cannot decrease $h_r$ in any round, and therefore
preserves $(\sigma,\tau)$-cohesiveness. Moreover, if a violation uses
condition~(iii), then $S=N$, and $S\subseteq U_\beta\subseteq N$
forces $U_\beta=N$. Consequently, if any group violates FJR+ at level
$\beta$, then so does $U_\beta$.

\begin{theorem}
\label{thm:verification}
An outcome satisfies FJR+ if and only if, for every $\beta\in[\ell]$, $\sum_{r=1}^{\ell}h_r(U_\beta)\leq n\beta$, where equality is allowed only if $U_\beta\neq N$ and the positive
values among $h_1(U_\beta),\ldots,h_\ell(U_\beta)$ are not all equal.
This condition can be checked in $O((n+\ell)m\ell)$ time.
\end{theorem}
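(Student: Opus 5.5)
We prove the equivalence through the monotonicity reduction described just before the statement. The reformulated condition says that no pair $(U_\beta,\beta)$ triggers any of the three conditions of \cref{def:fjrplus}, since all members of $U_\beta$ have satisfaction below $\beta$. Indeed, a strict inequality $\sum_r h_r(U_\beta)>n\beta$ is exactly condition~(i). At equality, condition~(iii) is triggered precisely when $U_\beta=N$. Condition~(ii) is triggered precisely when the positive values among $h_1(U_\beta),\ldots,h_\ell(U_\beta)$ are all equal. For this last claim we use the observation recorded after \cref{def:fjrplus}. Cohesiveness gives $\sum_r h_r(S)\geq\sigma\tau$, so $\sigma\tau\geq n\beta=\sum_r h_r(S)$ forces $\tau$ rounds with value exactly $\sigma$ and zero elsewhere. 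Conversely, if the positive values all equal some $\sigma$ on $\tau$ rounds, then $S$ is $(\sigma,\tau)$-cohesive with $\sigma\tau=n\beta$. This requires $\beta\geq1$, so some value is positive and $\tau\geq1$. We check these boundary equivalences carefully. In the degenerate case $U_\beta=\varnothing$, the sum is $0<n\beta$, so no condition applies.

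For the ``only if'' direction, we assume FJR+ and suppose that the condition fails for some $\beta$. If $U_\beta$ were empty, the sum would be $0\le n\beta$ and no failure could occur, so $U_\beta$ is nonempty. By the previous paragraph, one of the conditions (i)--(iii) then holds for $S=U_\beta$ at level $\beta$. Yet no member of $U_\beta$ reaches $\beta$, which contradicts FJR+.

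For the ``if'' direction, we suppose that FJR+ fails, witnessed by a group $S$ and level $\beta$, with every member of $S$ having satisfaction below $\beta$. Then $S\subseteq U_\beta$. Since $h_r$ is monotone under inclusion, the sum for $U_\beta$ is at least the sum for $S$, and cohesiveness transfers from $S$ to $U_\beta$. If (i) holds for $S$, the sum for $U_\beta$ strictly exceeds $n\beta$. If (iii) holds, then $U_\beta=N=S$, so the condition fails with equality or worse. If (ii) holds, then $U_\beta$ is $(\sigma,\tau)$-cohesive with $\sigma\tau\ge n\beta$. Either its sum exceeds $n\beta$, or the sum equals $n\beta$, in which case the characterization above shows that its positive values are all equal. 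In every case the stated condition is violated.

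For the running time, we first compute all satisfactions in $O(n\ell)$ time. We then exploit the fact that the sets $U_\beta$ are nested as $\beta$ grows. We sort voters by satisfaction by bucketing values in $\{0,\ldots,\ell\}$. We add voters incrementally while maintaining counts $\mathrm{cnt}[r][c]$ together with the running maxima $h_r$. Adding a voter costs $O(\sum_r |a_{i,r}|)=O(m\ell)$, since each increment can only raise the maximum and can be checked in constant time. The total for all additions is $O(nm\ell)$. For each $\beta$ we then read off the $\ell$ values $h_r(U_\beta)$ and test the sum and the equal-positive-values condition in $O(\ell)$ time. This gives $O(\ell^2)\subseteq O(\ell m\ell)$ overall. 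Initialising the count table costs $O(m\ell)$, so the total time is $O((n+\ell)m\ell)$. We expect the main obstacle to be the boundary bookkeeping rather than the monotonicity argument: we must show that the equality clause exactly captures conditions (ii) and (iii), including the empty and $U_\beta=N$ cases.
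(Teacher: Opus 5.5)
Your proposal is correct and follows the paper's proof. You reduce every violation at level $\beta$ to one by $U_\beta$ using monotonicity of $h_r$, and you identify condition~(ii) at total exactly $n\beta$ with all positive round values being equal, just as the paper does. The only difference is the counting step, and both versions meet the $O((n+\ell)m\ell)$ bound. You sweep $\beta$ upward, inserting voters and maintaining running maxima, whereas the paper builds a satisfaction histogram for each round-candidate pair and then takes prefix sums and maxima; your sweep in fact runs in $O(nm\ell+\ell^2)$.
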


\begin{proof}
Fix $\beta\in[\ell]$. If $U_\beta=\varnothing$, then
$\sum_r h_r(U_\beta)=0<n\beta$, and no group violates FJR+ at this
level. Otherwise, all members of $U_\beta$ have satisfaction below
$\beta$, so by the preceding argument, a violation exists at this level
if and only if one of the three conditions of \cref{def:fjrplus}
applies to $U_\beta$. Condition~(i) rules out a total exceeding
$n\beta$, and condition~(iii) rules out equality when $U_\beta=N$.

Now suppose that the total is at most $n\beta$. A violation under
condition~(ii) requires $\tau$ rounds whose values are at least
$\sigma$, with $\sigma\tau\geq n\beta$. This is possible only if the
total is exactly $n\beta$, those $\tau$ values all equal $\sigma$, and
every other value is zero. Conversely, if the total is $n\beta$ and
the positive values are all equal, they provide exactly such a choice
of $\sigma$ and $\tau$; note that there is at least one positive value
because the total is positive.

For the running time, first compute all voter satisfactions. For each
round $r$ and candidate $c$, count how many voters approving $c$ in
round $r$ have each satisfaction value from $0$ to $\ell$; including
initialization, this takes $O(n+\ell)$ time per round-candidate pair.
Accumulating these counts in increasing order of satisfaction gives,
for every level $\beta\in[\ell]$, the number of these voters in
$U_\beta$. Taking maxima over candidates then yields all values
$h_r(U_\beta)$ in $O((n+\ell)m\ell)$ time. Finally, for each $\beta$,
we add the $\ell$ values and compare the positive ones; these checks
take $O(\ell^2)$ time in total, within the stated bound.
\end{proof}

The verifier thus needs only satisfaction counts and approval counts.
For instance, suppose that $n\beta=4$ and $U_\beta\neq N$. Round
counts $(2,2,0,0)$ trigger condition~(ii), so the outcome violates
FJR+, whereas round counts $(2,1,1,0)$ pass the test at this level.
Both totals equal four, but only the first has equal positive values;
the zero entries correspond to rounds offering the group no approved
candidate and play no role in this distinction. When the test fails,
it also identifies the underrepresented group $U_\beta$ and the level
it deserves. The test decides the axiom exactly and applies to any
outcome, regardless of the voting rule that produced it.

\section{Polynomial-Time Computation}
\label{sec:computation}

We now show how to compute an FJR+ outcome in polynomial time by
modifying the $\varepsilon$-local search PAV rule
($\varepsilon$-lsPAV). Our rule builds on the polynomial-time local
search PAV rule of \citet{aziz2018complexity}, its temporal adaptation
by \citet{chandak2026proportional}, and the EJR+ guarantee for
temporal $\varepsilon$-lsPAV established by
\citet{phillips2025strengthening}. The modification subtracts a small
multiple of the minimum satisfaction from the PAV score.

To define the score, let $H_0=0$ and $H_j=\sum_{k=1}^j1/k$ for
$j\geq1$. The PAV score of an outcome is
$\Phi(\mathbf{o})=\sum_{i\in N}H_{\mathrm{sat}_i(\mathbf{o})}$.
When a voter's satisfaction rises from $j$ to $j+1$, the score
increases by $1/(j+1)$, so an improvement for a less satisfied voter
counts for more. Temporal local search PAV repeatedly changes the
candidate in a single round whenever doing so increases $\Phi$. Its
$\varepsilon$-lsPAV version accepts a change only if it increases
$\Phi$ by more than a positive threshold $\varepsilon$; a suitable
threshold yields both polynomial running time and the EJR+ guarantee.
We call an outcome a \emph{local optimum} of $\Phi$ if no single-round
change increases $\Phi$. In multiwinner voting, PAV can violate FJR
\citep{peters2021proportional}. The following example shows that, in
temporal voting, a local optimum of $\Phi$ can likewise violate FJR.

\begin{example}
\label[example]{ex:pav}
There are three voters, three rounds, and two candidates $a$ and $b$.
In round $r$, voter $r$ approves only $a$, and the other two voters
approve only $b$. The outcome $(a,a,a)$ gives every voter satisfaction
one. Changing any single round to $b$ changes the PAV score by
$2\cdot\tfrac12-1=0$, so no single-round change increases $\Phi$.
However, $(b,b,b)$ gives every voter satisfaction two. Taking $S=N$
and $T=[3]$ in the definition of FJR shows that $(a,a,a)$ violates
FJR.
\end{example}

The same tie in PAV scores occurs at every positive satisfaction
level, as the family of examples in \cref{prop:pav-family} shows.
The example suggests that insisting on a strict increase in $\Phi$ is
too restrictive: a change that leaves $\Phi$ unchanged can be a
necessary first step towards better representation. Our rule accepts
such a change when it \emph{lowers} the minimum satisfaction. This
direction may seem counterintuitive, but it is exactly what resolves
the equality cases in our proof.

The idea is as follows. Given a group violating FJR+, we compare the
current choice in each round with a candidate approved by as many
group members as possible. If none of these replacements increases
$\Phi$ substantially, we show that their changes in $\Phi$ sum to
zero, that none of them raises the minimum satisfaction, and that at
least one of them lowers it. Subtracting a small multiple of the
minimum satisfaction therefore makes at least one replacement increase
the modified score. The proof gives an explicit lower bound on this
increase, which ensures polynomial running time without requiring the
algorithm to identify a violating group.

Formally, define the modified score $\Psi$ and the improvement
threshold $\varepsilon$ by
\begin{equation}
\Psi(\mathbf{o})
=\Phi(\mathbf{o})-\delta\min_{i\in N}\mathrm{sat}_i(\mathbf{o}),
\quad
\delta=\frac{1}{4\ell^3},
\quad
\varepsilon=\frac{1}{8\ell^4}.
\label{eq:potential}
\end{equation}
For a round $r$ and candidate $c$, let $\mathbf{o}^{r\leftarrow c}$
denote the outcome obtained from $\mathbf{o}$ by replacing $o_r$ with
$c$. We fix an order of the candidates, and inspect rounds in
increasing order and, within each round, candidates in this fixed
order. The rule is as follows.

\begin{center}
\begin{minipage}{0.92\linewidth}
\hrule
\vspace{6pt}
\textbf{Modified $\varepsilon$-lsPAV for FJR+}
\begin{enumerate}[leftmargin=*,itemsep=3pt]
\item Start with the outcome $\mathbf{o}$ selecting the first candidate
in every round.
\item While some $r\in[\ell]$ and $c\in C$ satisfy
$\Psi(\mathbf{o}^{r\leftarrow c})-\Psi(\mathbf{o})>\varepsilon$,
use the first such pair $(r,c)$ to replace $\mathbf{o}$ with
$\mathbf{o}^{r\leftarrow c}$.
\item Return $\mathbf{o}$.
\end{enumerate}
\vspace{2pt}
\hrule
\end{minipage}
\end{center}

In \cref{ex:pav}, starting from $(a,a,a)$, the first change leaves
$\Phi$ unchanged and lowers the minimum satisfaction from one to zero,
so it increases $\Psi$ by $\delta>\varepsilon$. The fixed orders serve
only to make the rule deterministic: our guarantees hold for every
initial outcome and every choice among the changes exceeding the
threshold. We also stress that the negative term is a device for
obtaining the representation guarantee; it does not express a
preference about the distribution of satisfaction in the final
outcome.

\begin{theorem}
\label{thm:computation}
Every outcome returned by modified $\varepsilon$-lsPAV satisfies FJR+.
The rule makes fewer than $8n\ell^4H_\ell$ changes and can be
implemented in $O(n^2m\ell^5\log(\ell+1))$ time.
\end{theorem}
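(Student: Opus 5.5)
The plan is to prove the representation guarantee by contradiction, using an averaging argument over single-round replacements in the style of the EJR proofs for PAV. The running time then follows from a potential argument on $\Psi$. Let $\mathbf{o}$ be an outcome returned by the rule, and suppose it violates FJR+ at some level $\beta$. By the discussion preceding \cref{thm:verification}, we may take the violating group to be $S=U_\beta$. Then every $i\in S$ has $\mathrm{sat}_i(\mathbf{o})\le\beta-1$, and $S$ satisfies one of conditions (i)--(iii) of \cref{def:fjrplus}. For each round $r$, fix a candidate $c_r$ approved by $h_r(S)$ members of $S$, and let $\Delta_r=\Phi(\mathbf{o}^{r\leftarrow c_r})-\Phi(\mathbf{o})$. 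The goal is to show that some $r$ gives $\Psi(\mathbf{o}^{r\leftarrow c_r})-\Psi(\mathbf{o})>\varepsilon$, contradicting termination.

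\emph{Step 1: summing the replacements.} I will show that $\sum_r\Delta_r\ge\frac1\beta\sum_r h_r(S)-n$. The gain side is handled per member of $S$. A member of $S$ approving $c_r$ but not $o_r$ gains $1/(\mathrm{sat}_i+1)\ge1/\beta$. A member approving both contributes nothing, but it is one of the at most $\mathrm{sat}_i$ rounds in which $i$ is satisfied. The loss side is handled per voter. A voter $i$ with $\mathrm{sat}_i>0$ loses $1/\mathrm{sat}_i$ in each of at most $\mathrm{sat}_i$ rounds, so at most $1$ in total. The bookkeeping for members of $S$ in rounds with $o_r=c_r$, or where they approve both candidates, must be arranged so that these rounds cost at most their share of the bound $1$. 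Since each round contributes $h_r(S)/\beta$ on the gain side, this yields the inequality.

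\emph{Step 2: the strict case.} Under condition (i), $\sum_r h_r(S)\ge n\beta+1$, so $\sum_r\Delta_r\ge1/\beta$. Hence some $\Delta_r\ge1/(\beta\ell)\ge1/\ell^2$. A single-round change moves the minimum satisfaction by at most one, so the change in $\Psi$ is at least $1/\ell^2-\delta>\varepsilon$.

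\emph{Step 3: the boundary cases (ii) and (iii).} These are the main obstacle. Here $\sum_r h_r(S)=n\beta$, and Step 1 only gives $\sum_r\Delta_r\ge0$. I would first prove a quantised slack lemma: every inequality used in Step 1 that is not tight loses at least some amount of order $1/\ell^2$. The lost amounts are differences such as $1/k-1/(k+1)$ or a missing term $1/k$ with $k\le\ell$. Consequently, either some $\Delta_r$ exceeds roughly $1/\ell^3\gg\varepsilon+\delta$, which finishes as in Step 2, or all inequalities are tight.

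Tightness forces strong structure:
\begin{itemize}
\item every member of $S$ has satisfaction exactly $\beta-1$;
\item every voter with positive satisfaction loses exactly $1$, so none of her satisfied rounds uses $c_r$;
\item every voter outside $S$ has positive satisfaction.
\end{itemize}
Together with $\sum_r\Delta_r\ge0$ and the absence of $\Phi$-improvements above threshold, this should give $\Delta_r=0$ for all rounds with $o_r\neq c_r$.

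Now consider such a round $r$ and replace $o_r$ by $c_r$. Under (iii), $S=N$, so all voters sit at $\beta-1$, which is the minimum. The replacement makes the voters approving $o_r$ drop to $\beta-2$ and raises nobody above the old minimum in a way that could increase it. So the minimum falls by one and $\Psi$ rises by $\delta>\varepsilon$. Under (ii), the positive $h_r(S)$ all equal $\sigma$. I expect to use this equality, together with tightness for the voters outside $S$, to show that the minimum over $N$ is attained inside $S$ at $\beta-1$, or else that some loss is non-tight. Then the same argument applies. This last verification is where I expect the real work to lie.

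\emph{Step 4: running time.} Along the run, $\Phi$ lies in $[0,nH_\ell]$ and $\delta\min_i\mathrm{sat}_i$ lies in $[0,\delta\ell]$. Hence $\Psi$ varies over a range of at most $nH_\ell+\delta\ell$. Each change raises $\Psi$ by more than $\varepsilon=1/(8\ell^4)$, so there are fewer than $8n\ell^4H_\ell$ changes. In each iteration we evaluate $O(m\ell)$ pairs $(r,c)$. Each evaluation takes $O(n)$ time, using maintained satisfactions, precomputed harmonic values, and a count of voters at the minimum. With $H_\ell=O(\log(\ell+1))$, this gives $O(n^2m\ell^5\log(\ell+1))$ in total.
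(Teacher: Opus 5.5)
\textbf{Verdict.} Your Steps 1, 2 and 4 follow the paper's route: restrict to $U_\beta$, compare each round with a candidate $c_r$ attaining $h_r(U_\beta)$, bound the total loss from removing the selected candidates by $n$, and use that every nonzero slack is at least of order $1/\ell^2$. Step~3 is where the theorem is really proved, and it has a genuine gap.

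\textbf{Two consequences you draw from tightness do not follow.}
\begin{itemize}
\item Tightness does not put every member of $S$ at $\beta-1$. It only forces each approver of $c_r$ to lie in $S$, to have satisfaction exactly $\beta-1$, and not to approve $o_r$. It also forces \emph{every} voter, not only those outside $S$, to have positive satisfaction. A member of $S$ who approves no comparison candidate produces no slack whatever her satisfaction, so the minimum can lie strictly below $\beta-1$.
\item More seriously, tightness yields only $\sum_r\Delta_r=0$, not $\Delta_r=0$ in each round. Each $\Delta_r=|V_r|/\beta-\sum_{i\in W_r}1/u_i$ is otherwise unconstrained, and the paper explicitly notes that these changes may have either sign. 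Let $d_r$ be the change in minimum satisfaction caused by the replacement. Termination gives only $\Delta_r\le\varepsilon+\delta d_r$, which does not force $\Delta_r\le 0$. So in the round where you make the minimum fall, a negative $\Delta_r$ may cancel the gain $\delta$.
\end{itemize}
In addition, case~(ii) is left open altogether.

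\textbf{The missing idea.} You need to control the sign of $d_r$ for \emph{all} comparison replacements and then sum over rounds. The boundary conditions enter exactly to show that the approver set $V_r$ of $c_r$ is a proper subset of $U$ in every round.
\begin{itemize}
\item Under~(ii), suppose the common positive value of $h_r(U)$ were $|U|$. Take a voter of $U$ and a round in which she approves $o_r$. Then $h_r(U)>0$, so she also approves $c_r$, contradicting tightness.
\item Under~(iii), $V_r=N$ would force $W_r=\varnothing$ and $\Delta_r=n/\beta\ge1/\ell$, which is a large improvement.
\end{itemize}
Given $V_r\subsetneq U$ and $V_r\subseteq\{i:u_i=\beta-1\}$, no replacement raises the minimum. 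If $u_{\min}<\beta-1$, no voter at the minimum gains. If $u_{\min}=\beta-1$, the voters at the minimum are exactly $U$, and one of them lies outside $V_r$.

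Next, a voter at the minimum has positive satisfaction, so she approves $o_r$ in some round. There she is not in $V_r$, so she loses one approval and $d_r=-1$. Therefore $\sum_r(\Delta_r-\delta d_r)\ge\delta$, and some round increases $\Psi$ by at least $\delta/\ell=2\varepsilon$.

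Once $d_r\le0$ holds everywhere, your single-round idea can also be rescued. Termination then gives $\Delta_r\le\varepsilon$ for all $r$, hence $\Delta_r\ge-(\ell-1)\varepsilon$ because the $\Delta_r$ sum to zero. The round with $d_r=-1$ then increases $\Psi$ by at least $\delta-(\ell-1)\varepsilon=(\ell+1)\varepsilon>\varepsilon$. Either way, the step $d_r\le0$ for every round cannot be skipped.

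\textbf{A smaller point about Step~4.} A range of $nH_\ell+\delta\ell$ only yields fewer than $8n\ell^4H_\ell+2\ell^2$ changes. For the stated bound you need $\Psi\ge0$. This holds because either some voter has satisfaction zero, so the subtracted term vanishes, or $\Phi\ge n\ge1>\delta\ell$.
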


\subsection{The Rule Satisfies FJR+}

We prove that every FJR+ violation admits a single-round change that
increases $\Psi$ by at least twice the stopping threshold. The key
calculation bounds the loss in $\Phi$ caused by removing the current
choices. Summed over all rounds, this loss is at most $n$: a voter
with positive satisfaction $u$ contributes $1/u$ in each of the $u$
rounds in which she approves the selected candidate, and hence one in
total. Comparing this loss with the score contributed by alternative
candidates yields either a large increase in $\Phi$ or the equality
case described above.

\begin{lemma}
\label[lemma]{lem:improving-replacement}
If an outcome $\mathbf{o}$ violates FJR+, then some round $r\in[\ell]$ and candidate $c\in C$ satisfy $\Psi(\mathbf{o}^{r\leftarrow c})-\Psi(\mathbf{o}) \geq\frac{1}{4\ell^4}=2\varepsilon$.
\end{lemma}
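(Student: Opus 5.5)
The plan is to fix a violating group and level, replace in every round the current choice by a candidate that is best for the group, and average the resulting changes in $\Phi$. Let $S$ and $\beta\in[\ell]$ witness the violation, so $\mathrm{sat}_i(\mathbf{o})\leq\beta-1$ for all $i\in S$ and one of conditions~(i)--(iii) of \cref{def:fjrplus} holds. For each round $r$, let $c_r$ be a candidate approved by $h_r(S)$ members of $S$, and write $\Delta_r=\Phi(\mathbf{o}^{r\leftarrow c_r})-\Phi(\mathbf{o})$. Changing one round moves each satisfaction by at most one, so the minimum satisfaction changes by at most one. Hence $\Psi(\mathbf{o}^{r\leftarrow c_r})-\Psi(\mathbf{o})\geq\Delta_r-\delta$ in general, and $\geq\Delta_r$ whenever the minimum does not rise.

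\textbf{Step 1: a lower bound on the total change.} Each $\Delta_r$ splits into gains and losses.
\begin{itemize}
\item A member $i\in S$ who approves $c_r$ but not $o_r$ gains $1/(\mathrm{sat}_i+1)\geq1/\beta$.
\item A voter $i\in N$ who approves $o_r$ but not $c_r$ loses $1/\mathrm{sat}_i$.
\item A member of $S$ approving both contributes $0$. This is at least $1/\beta-1/\mathrm{sat}_i$, so I may charge such a member a gain $1/\beta$ and a loss $1/\mathrm{sat}_i$.
\end{itemize}
With this accounting, the total charged gain is $\sum_r h_r(S)/\beta$. A voter with positive satisfaction $u$ is charged $1/u$ in at most $u$ rounds, so the total charged loss is at most $n$, as the paper remarks before the lemma. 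Therefore
\[
\sum_{r=1}^{\ell}\Delta_r\;\geq\;\frac{1}{\beta}\sum_{r=1}^{\ell}h_r(S)-n .
\]

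\textbf{Step 2: condition~(i).} Here $\sum_r h_r(S)\geq n\beta+1$, so the total is at least $1/\beta\geq1/\ell$. Some round then has $\Delta_r\geq1/\ell^2$, and $1/\ell^2-\delta\geq 1/(4\ell^4)$ because $\delta=1/(4\ell^3)$. The same computation applies whenever the inequality of Step~1 has slack of order $1/\ell^2$. Slack of at least $1/(\beta(\beta-1))$ arises in several ways:
\begin{itemize}
\item some voter, inside or outside $S$, has satisfaction zero, which lowers the loss bound to $n-1$;
\item some member of $S$ approves both $o_r$ and $c_r$ in some round;
\item some gaining member has satisfaction below $\beta-1$;
\item some voter is charged a loss in fewer than all of her satisfied rounds.
\end{itemize}
I will show that each of these either gives $\sum_r\Delta_r\geq 1/\ell^2$ or is absorbed in the next step.

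\textbf{Step 3: the tight case under (ii) or (iii), which I expect to be the main obstacle.} Here $\sum_r h_r(S)=n\beta$, and I assume no round has $\Delta_r\geq 2\varepsilon+\delta$. The first task is a granularity argument. Each $\Delta_r$ is a signed sum of terms $1/k$ with $k\leq\ell$, so a nonzero $\Delta_r$ is at least $1/\mathrm{lcm}$-sized. That is too weak by itself, so instead I will use the exact structure to show that every $\Delta_r\geq 0$ and $\sum_r\Delta_r\geq0$ with no slack. Hence every $\Delta_r=0$ exactly, or some $\Delta_r$ is at least about $1/\ell^2$.

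No slack forces the following:
\begin{itemize}
\item every voter has positive satisfaction;
\item every member of $S$ has satisfaction exactly $\beta-1\geq1$ and never approves both $o_r$ and $c_r$;
\item every voter loses in each round she currently approves, so $c_r$ is approved by no one who approves $o_r$.
\end{itemize}
In particular, replacing $o_r$ by $c_r$ raises only voters of $S$, from $\beta-1$ to $\beta$, and lowers satisfied voters by one. Then the minimum satisfaction $\mu$ never rises, provided some voter at level $\mu$ loses, and I need one round where a voter at the minimum loses, so that $\Psi$ increases by $\delta\geq2\varepsilon$.

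Under~(iii), $S=N$, so everyone is at $\beta-1\geq1$, which is the minimum. Any round with $o_r$ approved by someone lowers the minimum, and such a round exists since satisfactions are positive.

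Under~(ii), the positive $h_r(S)$ all equal $\sigma$ on $\tau$ rounds with $\sigma\tau=n\beta$. Members of $S$ sit at $\beta-1$, and every voter is charged in all her satisfied rounds. I would argue that a voter at the global minimum $\mu\leq\beta-1$ approves some $o_r$ with $r$ among these $\tau$ rounds; otherwise those rounds carry no loss and the count $\sum_r\Delta_r$ has slack. This bookkeeping is where I expect to spend the effort. Combining the cases gives a round with increase at least $\min\{1/\ell^2-\delta,\delta\}\geq 2\varepsilon$.
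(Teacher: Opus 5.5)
Your Steps~1 and~2 are sound and match the paper's accounting. (The paper works with $U_\beta$ rather than an arbitrary violating $S$, which is inessential.) The gap is in Step~3. Once all slack is removed you know only that $\sum_r\Delta_r=0$, with $\Delta_r=h_r(S)/\beta-p_r$ and $p_r=\sum_{i:\,o_r\in a_{i,r}}1/\mathrm{sat}_i(\mathbf{o})$. Nothing ties the individual losses $p_r$ to $h_r(S)/\beta$ round by round, and you give no argument for ``every $\Delta_r\geq0$''.

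In fact that claim is false. Take $\beta=2$ and $\ell=12$, with all members of $S$ at satisfaction one and voters outside $S$ at satisfaction levels $2,\ldots,11$ (note $\mathrm{lcm}(2,\ldots,11)=27720$). One can build a tight instance of condition~(ii) of this form in which two rounds have $\Delta_r=\mp1/27720$ and all other rounds have $\Delta_r=0$. Both nonzero values are well below your threshold $2\varepsilon+\delta$.

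So the round in which your chosen minimum voter loses may have $\Delta_r<0$. The gain of $\delta$ from lowering the minimum is then offset by a loss in $\Phi$ that your argument does not bound. Two further signs point to the same problem. First, the conclusion you draw from ``every $\Delta_r=0$'' uses nothing from (ii) or (iii) beyond $\sum_r h_r(S)=n\beta$. If it were valid, it would show that the rule attains the stronger requirement \eqref{eq:stronger-total}, which the paper leaves open (see the discussion after \cref{prop:integer-boundary}). Second, your final bound $\min\{1/\ell^2-\delta,\delta\}$ is stronger than the lemma's $1/(4\ell^4)=\delta/\ell$.

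The missing idea is to give up per-round control of $\Phi$ and instead control the change $d_r$ in minimum satisfaction in \emph{every} round. If $d_r\leq0$ for all $r$ and $d_r=-1$ for some $r$, then
\[
\sum_{r=1}^{\ell}\bigl(\Psi(\mathbf{o}^{r\leftarrow c_r})-\Psi(\mathbf{o})\bigr)=\sum_{r=1}^{\ell}\Delta_r-\delta\sum_{r=1}^{\ell}d_r\geq\delta,
\]
so averaging gives a round with increase at least $\delta/\ell=2\varepsilon$.

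The second property is easy and essentially in your Step~3: any minimum voter has positive satisfaction and loses in a round where she approves $o_r$.

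The first property is where the boundary conditions are really needed, and it is not what your case~(ii) bookkeeping addresses. If the minimum is below $\beta-1$, no minimum voter approves any $c_r$. If it equals $\beta-1$, every member of $S$ is at the minimum, and you must exclude a comparison candidate approved by all of $S$.
\begin{itemize}
\item Under (ii), a common positive value $h_r(S)=|S|$ would place a member who approves $o_r$ among the approvers of $c_r$. Such a member exists because satisfactions are positive, so this contradicts the disjointness forced by tightness.
\item Under (iii), a candidate approved by all of $N$ forces nobody to approve $o_r$. Then $\Delta_r=n/\beta\geq1/\ell$, which you have already excluded.
\end{itemize}

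Separately, tightness forces satisfaction $\beta-1$ only for members who approve some $c_r$, not for every member of $S$ as you claim. This is harmless, since a lower member is a minimum voter who never gains.
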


\begin{proof}
Write $u_i=\mathrm{sat}_i(\mathbf{o})$. By \cref{thm:verification},
there is a level $\beta\in[\ell]$ such that $U=U_\beta$ is nonempty and
\begin{equation}
\sum_{r=1}^{\ell}h_r(U)\geq n\beta,
\label{eq:violating-total}
\end{equation}
where, if equality holds, either $U=N$ or all positive values
$h_r(U)$ are equal. In each round $r$, choose a candidate $c_r$
attaining $h_r(U)$; we call it the \emph{comparison candidate} of
round $r$. All replacements below are evaluated at the fixed outcome
$\mathbf{o}$, not applied one after another. Write
$G_{r,c_r}=\Phi(\mathbf{o}^{r\leftarrow c_r})-\Phi(\mathbf{o})$ for
the resulting change in the PAV score. Let
$W_r=\{i\in N:o_r\in a_{i,r}\}$ and
$V_r=\{i\in N:c_r\in a_{i,r}\}$ be the sets of voters approving the
current and the comparison candidate, respectively. By the choice of
$c_r$, we have $|V_r\cap U|=h_r(U)$. Every voter in $W_r$ has positive
satisfaction, and the score lost by removing the current choice in
round $r$ is $p_r=\sum_{i\in W_r}\frac{1}{u_i}$. Since each voter with
$u_i>0$ contributes $1/u_i$ in exactly $u_i$ rounds,
\begin{equation}
\sum_{r=1}^{\ell}p_r=|\{i\in N:u_i>0\}|\leq n.
\label{eq:total-loss}
\end{equation}
This calculation of the loss from removing selected candidates is a
standard ingredient in the analysis of PAV in multiwinner voting
\citep{aziz2017justified}, public decisions
\citep{skowron2022public}, and temporal voting
\citep{chandak2026proportional,phillips2025strengthening}.

\emph{Comparing approval counts with changes in the PAV score.}
Replacing the current candidate costs $p_r$. The comparison candidate
then contributes $1/(u_i+1)$ for each voter in $V_r\setminus W_r$ and
restores $1/u_i$ for each voter in $V_r\cap W_r$. Thus
\begin{equation}
G_{r,c_r}
=\sum_{i\in V_r\setminus W_r}\frac{1}{u_i+1}
  +\sum_{i\in V_r\cap W_r}\frac{1}{u_i}-p_r =\frac{h_r(U)}{\beta}-p_r+e_r,
\label{eq:gain}
\end{equation}
where $e_r$ is the amount by which the score contributed by the
comparison candidate exceeds $h_r(U)/\beta$:
\begin{equation}
e_r=
\sum_{i\in V_r\setminus W_r}\frac{1}{u_i+1}
+\sum_{i\in V_r\cap W_r}\frac{1}{u_i}
-\frac{|V_r\cap U|}{\beta}.
\label{eq:extra-gain}
\end{equation}
Each voter's contribution to $e_r$ is nonnegative. An approver in
$U\setminus W_r$ contributes $1/(u_i+1)-1/\beta$, which is nonnegative
because $u_i<\beta$; an approver in $U\cap W_r$ contributes
$1/u_i-1/\beta>0$; and an approver outside $U$ contributes a positive
reciprocal.

We next show that a positive $e_r$ cannot be arbitrarily small. Every
denominator in these contributions lies in $[\ell]$; in particular, if
$i\notin W_r$, then $u_i\leq\ell-1$. For integers
$1\leq a<b\leq\ell$, we have $1/a-1/b=(b-a)/(ab)\geq1/\ell^2$, and a
positive reciprocal is at least $1/\ell\geq1/\ell^2$. It follows that
\begin{equation}
e_r=0\quad\text{or}\quad e_r\geq\frac{1}{\ell^2}.
\label{eq:positive-gain}
\end{equation}
Moreover, since only an approver in $U\setminus W_r$ with
satisfaction exactly $\beta-1$ contributes zero, $e_r=0$ implies
\begin{equation}
V_r\subseteq\{i\in N:u_i=\beta-1\}\subseteq U,
\quad V_r\cap W_r=\varnothing.
\label{eq:approvers}
\end{equation}

\emph{Either a large increase in the PAV score or exact equality.}
A single-round replacement changes the minimum satisfaction by at most
one. Hence, if $G_{r,c_r}\geq1/\ell^3$ for some round $r$, then
\[
\Psi(\mathbf{o}^{r\leftarrow c_r})-\Psi(\mathbf{o})
\geq\frac{1}{\ell^3}-\delta
=\frac{3}{4\ell^3}
\geq\frac{1}{4\ell^4},
\]
and the lemma holds. We may therefore assume that
\begin{equation}
G_{r,c_r}<\frac{1}{\ell^3}
\quad\text{for every }r\in[\ell].
\label{eq:small-gains}
\end{equation}
Summing \eqref{eq:gain} over all rounds and using
\eqref{eq:total-loss} gives
\begin{equation}
\sum_{r=1}^{\ell}G_{r,c_r}
=\frac{\sum_r h_r(U)-n\beta}{\beta}
 +n-|\{i:u_i>0\}|+\sum_{r=1}^{\ell}e_r
<\frac{1}{\ell^2}.
\label{eq:equality}
\end{equation}
All terms to the right of the equality sign are nonnegative, and each
of them is either zero or at least $1/\ell^2$. Indeed, if the first
term is positive, integrality makes it at least
$1/\beta\geq1/\ell\geq1/\ell^2$; if the second term is positive, it
is at least one; and a positive $e_r$ is at least $1/\ell^2$ by
\eqref{eq:positive-gain}. Hence all these terms are zero. In
particular,
\begin{equation}
\sum_{r=1}^{\ell}h_r(U)=n\beta,
\quad u_i>0\text{ for every }i\in N,
\quad e_r=0\text{ for every }r\in[\ell],
\label{eq:exact-equality}
\end{equation}
and
\begin{equation}
\sum_{r=1}^{\ell}G_{r,c_r}=0.
\label{eq:zero-total}
\end{equation}
Consequently, \eqref{eq:approvers} holds in every round. Moreover,
$\beta\geq2$, since $U$ is nonempty and every voter has positive
satisfaction. Note that the lower bounds on positive values apply only
to the nonnegative terms in \eqref{eq:equality}; the individual changes
$G_{r,c_r}$ may have either sign.

\emph{No comparison candidate is approved by all of $U$.}
Since the total in \eqref{eq:violating-total} is now exactly
$n\beta$, our choice of $\beta$ leaves two cases. We show that in both
cases, every $V_r\subseteq U$ omits at least one voter of $U$.

First suppose that all positive values $h_r(U)$ are equal, and
suppose for contradiction that one of them equals $|U|$; then all
positive values do. Choose a voter in $U$ and a round $r$ in which
she approves the current candidate; such a round exists by
\eqref{eq:exact-equality}. The current candidate then has an approver
in $U$ in round $r$, so $h_r(U)>0$ and hence $|V_r\cap U|=|U|$. The
chosen voter therefore belongs to $V_r\cap W_r$, contradicting
\eqref{eq:approvers}. Thus $h_r(U)<|U|$ in every round. Since
\eqref{eq:approvers} gives $V_r\subseteq U$ and $|V_r|=h_r(U)$, we
obtain $V_r\subsetneq U$ in every round.

Otherwise, $U=N$. If $V_r=N$ in some round $r$, then
\eqref{eq:approvers} gives $W_r=\varnothing$ and $u_i=\beta-1$ for
all voters. Hence $G_{r,c_r}=n/\beta\geq1/\ell\geq1/\ell^3$,
contradicting \eqref{eq:small-gains}. Again, $V_r\subsetneq U$ in
every round.

\emph{The effect on the minimum satisfaction.}
Let $u_{\min}=\min_i u_i$, and let $d_r=\min_i\mathrm{sat}_i(\mathbf{o}^{r\leftarrow c_r})-u_{\min}$ be the change in the minimum satisfaction caused by replacing $o_r$
with $c_r$. We first show that no replacement raises the minimum.
Since $U$ is nonempty, $u_{\min}\leq\beta-1$. If
$u_{\min}<\beta-1$, then by \eqref{eq:approvers}, no voter at the
minimum approves $c_r$, so $d_r\leq0$. If $u_{\min}=\beta-1$, then
every voter in $U$ is at the minimum, and at least one of them is not
in $V_r$ because $V_r\subsetneq U$. This voter gains no approval, so
again $d_r\leq0$.

We next show that at least one replacement lowers the minimum. Choose
a voter at satisfaction $u_{\min}$. Her satisfaction is positive by
\eqref{eq:exact-equality}, so she approves the current candidate in
some round $r$. Since $V_r$ and $W_r$ are disjoint, she loses one
approval in that replacement. No voter loses more than one approval,
so $d_r=-1$ in this round. Therefore $\sum_r d_r\leq-1$, and together
with \eqref{eq:zero-total} we obtain
\[
\sum_{r=1}^{\ell}
(\Psi(\mathbf{o}^{r\leftarrow c_r})-\Psi(\mathbf{o}))
=\sum_{r=1}^{\ell}G_{r,c_r}-\delta\sum_{r=1}^{\ell}d_r
\geq\delta.
\]
Hence at least one of the $\ell$ summands is at least
$\delta/\ell=1/(4\ell^4)$, as required.
\end{proof}

\begin{proof}[Proof of the FJR+ guarantee in \cref{thm:computation}]
At a returned outcome, every single-round replacement increases $\Psi$
by at most $\varepsilon$. If this outcome violated FJR+,
\cref{lem:improving-replacement} would provide a replacement
increasing $\Psi$ by at least $2\varepsilon>\varepsilon$, a
contradiction.
\end{proof}

\subsection{Running Time}

\begin{proof}[Proof of the running-time bound in \cref{thm:computation}]
We first show that every outcome $\mathbf{o}$ satisfies
\[
0\leq\Psi(\mathbf{o})\leq nH_\ell.
\]
The upper bound follows from \eqref{eq:potential}. For the lower
bound, if some voter has satisfaction zero, the subtracted term
vanishes. Otherwise, $\Phi(\mathbf{o})\geq n\geq1$, whereas
$\delta\min_i\mathrm{sat}_i(\mathbf{o})\leq\delta\ell\leq1/4$.
Since every accepted change increases $\Psi$ by more than
$\varepsilon$, there are fewer than
$nH_\ell/\varepsilon=8n\ell^4H_\ell$ changes. If the current
satisfactions are maintained, a single replacement and its effect on
the minimum satisfaction can be evaluated in $O(n)$ time, so scanning
all $m\ell$ possible replacements takes $O(nm\ell)$ time. Including
the final scan, this gives a total running time of
$O(n^2m\ell^5\log(\ell+1))$.
\end{proof}

Neither algorithm nor their proofs rely on all rounds sharing the same
candidate set. If each round $r$ has its own nonempty candidate set
$C_r$, it suffices to replace $C$ by $C_r$ wherever candidates in that
round are considered. Writing $M=\sum_r|C_r|$, verification then takes
$O((n+\ell)M)$ time and computation takes $O(n^2M\ell^4\log(\ell+1))$
time.

Our computation guarantee does not include Pareto efficiency. For
example, consider three voters whose unique approved candidates in the
three rounds are $(a,b,a)$, $(a,b,b)$, and $(b,a,c)$, respectively,
with $a$ first in the candidate order. The initial outcome $(a,a,a)$
gives satisfactions $(2,1,1)$, and every single-round replacement
changes $\Psi$ by a nonpositive amount, so the rule returns this
outcome. Yet $(a,b,c)$ gives satisfactions $(2,2,1)$ and is therefore
a Pareto improvement. In fact, every Pareto improvement increases
$\Psi$: each unit of satisfaction gained raises $\Phi$ by at least
$1/\ell$, the minimum satisfaction rises by at most the total
satisfaction gained, and $\delta<1/\ell$. The obstacle is that a
Pareto improvement may require changing several rounds at once, as in
this example, whereas our rule considers only single-round changes.

\subsection{Representation at Integer Levels}
\label{sec:boundary}

We now return to the stronger requirement \eqref{eq:stronger-total},
which replaces $>$ by $\geq$ in condition~(i) and thereby implies all
three conditions of FJR+. The same verification argument applies:
\eqref{eq:stronger-total} is equivalent to
$\sum_r h_r(U_\beta)<n\beta$ for every $\beta\in[\ell]$. Moreover,
this stronger requirement can always be satisfied; what separates it
from our main result is computation.

\begin{proposition}
\label[proposition]{prop:integer-boundary}
Suppose an outcome $\mathbf{o}$ admits no single-round replacement
that increases $\Phi$, and no such replacement that preserves $\Phi$
and lowers minimum satisfaction. Then $\mathbf{o}$ satisfies
\eqref{eq:stronger-total}. Such an outcome exists in every temporal
election.
\end{proposition}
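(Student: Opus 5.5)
The approach is to rerun the argument of \cref{lem:improving-replacement} with the weaker hypothesis. Suppose $\mathbf{o}$ violates \eqref{eq:stronger-total}. By the verification remark preceding the proposition, there is then a level $\beta\in[\ell]$ with $U=U_\beta$ nonempty and $\sum_r h_r(U)\geq n\beta$. We choose comparison candidates $c_r$ attaining $h_r(U)$ and define $W_r,V_r,p_r,e_r$ and $G_{r,c_r}$ exactly as before.

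The identity \eqref{eq:gain} and the loss bound \eqref{eq:total-loss} carry over verbatim. Summing over rounds shows that $\sum_r G_{r,c_r}$ is a sum of nonnegative terms. Because no single-round replacement increases $\Phi$, every $G_{r,c_r}\le 0$, so the sum is at most $0$. This forces every nonnegative term to vanish and also gives $G_{r,c_r}=0$ in every round. Hence the conclusions \eqref{eq:exact-equality} and \eqref{eq:approvers} hold: every voter has positive satisfaction, $\sum_r h_r(U)=n\beta$, $V_r$ is contained in the voters at satisfaction $\beta-1$, and $V_r\cap W_r=\varnothing$. All of this follows without any quantitative thresholds, since only exact signs are needed.

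Next we produce a $\Phi$-preserving replacement that lowers the minimum satisfaction, which gives the contradiction. Take a voter $i$ at minimum satisfaction $u_{\min}\ge1$ and a round $r$ with $i\in W_r$. Replacing $o_r$ by $c_r$ keeps $\Phi$ unchanged because $G_{r,c_r}=0$. It lowers $i$'s satisfaction by one, and no voter's satisfaction rises above its old value. The minimum therefore drops exactly when no voter at the minimum gains, so the main obstacle is ruling out that some voter of $V_r$ sits at the minimum and compensates.

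Voters in $V_r$ have satisfaction $\beta-1\ge u_{\min}$. If $u_{\min}<\beta-1$ there is nothing to prove. If $u_{\min}=\beta-1$, then all of $U$ is at the minimum, and we need some voter of $U$ who neither gains nor was already in the selection. Voter $i$ itself works: $i\in W_r$ and $V_r\cap W_r=\varnothing$, so $i\notin V_r$ and $i$ drops to $\beta-2$. This is the crucial point, and it makes the case analysis $V_r\subsetneq U$ from the lemma unnecessary here. In both cases the new minimum is $u_{\min}-1$. This contradicts the hypothesis, so $\mathbf{o}$ satisfies \eqref{eq:stronger-total}.

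For existence, order outcomes lexicographically by the pair $(\Phi(\mathbf{o}),-\min_i\mathrm{sat}_i(\mathbf{o}))$. There are finitely many outcomes, so a maximizer of this order exists. At a maximizer, no single-round replacement increases $\Phi$, and none preserves $\Phi$ while lowering the minimum satisfaction. Hence the maximizer satisfies the hypotheses, and therefore \eqref{eq:stronger-total}.
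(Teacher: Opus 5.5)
Your proof is correct and follows the paper's argument step for step. The summed identity from \cref{lem:improving-replacement}, combined with $G_{r,c_r}\le 0$, forces every $G_{r,c_r}=0$, all satisfactions positive, and \eqref{eq:approvers}; a minimum-satisfaction voter approving $o_r$ then yields a $\Phi$-preserving replacement that lowers the minimum, and existence follows from a lexicographic maximizer of $(\Phi,-\min_i\mathrm{sat}_i)$. One slip: voters in $V_r$ do gain under this replacement, so your sentence that no voter's satisfaction rises is false. It is harmless, because voter $i$ falling to $u_{\min}-1$ lowers the minimum regardless of anyone else's gains, which also makes your case split on whether $u_{\min}=\beta-1$ unnecessary.
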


\begin{proof}
Suppose for contradiction that $\sum_r h_r(U_\beta)\geq n\beta$ for
some $\beta\in[\ell]$; in particular, $U_\beta$ is nonempty. We use
the notation and comparison candidates from the proof of
\cref{lem:improving-replacement}, with $U=U_\beta$. The identities
\eqref{eq:gain} and~\eqref{eq:total-loss}, as well as the
nonnegativity of each $e_r$, hold at any outcome. Since no replacement
increases $\Phi$, every $G_{r,c_r}$ is nonpositive, and therefore
\[
0\geq\sum_r G_{r,c_r}
=\frac{\sum_r h_r(U)-n\beta}{\beta}
 +n-|\{i:u_i>0\}|+\sum_r e_r
\geq0.
\]
As each of the three terms in the middle is nonnegative, equality
implies that every voter has positive satisfaction and every $e_r$ is
zero. Moreover, every $G_{r,c_r}$ is zero, since these changes are
nonpositive and sum to zero. Now choose a voter of minimum satisfaction
and a round in which she approves the selected candidate. By
\eqref{eq:approvers}, she does not approve the comparison candidate in
that round. The corresponding replacement therefore lowers the minimum
satisfaction by one while leaving $\Phi$ unchanged, contrary to the
hypothesis.

For existence, the finite set $C^\ell$ contains an outcome maximizing
$\Phi$. Among all such outcomes, choose one with the smallest minimum
satisfaction. A replacement increasing $\Phi$ would contradict the
first choice, and a replacement preserving $\Phi$ while lowering the
minimum would contradict the second. Hence this outcome satisfies both
local conditions. It is also Pareto efficient, since a Pareto
improvement would strictly increase $\Phi$.
\end{proof}

Why does this argument not yield a polynomial-time algorithm for
\eqref{eq:stronger-total}? \Cref{prop:integer-boundary} requires an
outcome at which no replacement increases $\Phi$ and none preserves
$\Phi$ while lowering the minimum satisfaction. Our rule, by contrast,
stops as soon as no replacement increases $\Psi$ by more than
$\varepsilon$. In the equality case of
\cref{lem:improving-replacement}, the changes in $\Phi$ need only sum
to zero, and individual changes may be positive or negative. The two
boundary conditions of FJR+ ensure that none of the comparison
replacements raises the minimum satisfaction, so their changes in
$\Psi$ sum to at least $\delta$; dividing by the number of rounds gives
the required improvement. Without these conditions, the argument cannot
exclude replacements that raise the minimum. Thus, although existence
and polynomial-time verification hold for both requirements, we can
bound the number of changes only for FJR+. Our polynomial-time
guarantee covers all integer levels required by FJR and EJR+, while
whether \eqref{eq:stronger-total} can be attained in polynomial time
remains open.

In multiwinner voting, \citet{kraiczy2024lower} show that local PAV can
take superpolynomially many steps under a fixed order of replacements
when every positive improvement is accepted. Their result, however,
does not settle the complexity of \eqref{eq:stronger-total} in
temporal voting.

\section{Discussion}
\label{sec:discussion}

FJR+ shows that flexible group representation in temporal voting is
compatible with efficient computation and verification. The two
algorithms play different roles: modified $\varepsilon$-lsPAV produces
an outcome, whereas the verifier checks any proposed outcome,
regardless of how it was obtained. This separation leaves room for
other voting rules that satisfy the same axiom.

Several questions remain. Our polynomial bound on the number of
changes is large, and a faster rule satisfying FJR+ would make the
guarantee more useful in applications. Relatedly, it would be useful
to understand how the initial outcome and the choice among improving
replacements affect the number of changes and the resulting voter
satisfaction in practice. We also do not know whether a
Pareto-efficient FJR+ outcome can be found in polynomial time. The comparison in \cref{sec:multiwinner} explains why
multiwinner algorithms do not directly settle these questions: although
the two FJR+ axioms coincide under static preferences when each round
is represented by separate candidates, their requirements differ once
agreement changes across rounds.

Moreover, one could seek a suitable group proportionality axiom
for temporal voting over public chores \citep{elkind2025temporalchores}, potentially consider a stronger proportionality axiom in this setting (e.g., adapt the recently discovered \emph{core}+ notion defined for multiwinner voting \citep{becker2026existencecoreapprovalbasedcommittee}), or quantify the loss in total voter satisfaction required by FJR+, extending the study of weaker temporal representation axioms by \citet{teh2026price}.

\subsection*{Declaration of generative AI use}
The author used GPT-6 Astra as a research assistant when exploring proof ideas and revising the exposition. He wrote the mathematical arguments and takes full responsibility for the content and conclusions of the work.

\bibliographystyle{plainnat}
\bibliography{bib}

@inproceedings{peters2021proportional,
  author = {Dominik Peters and Grzegorz Pierczy{\'n}ski and Piotr Skowron},
  title = {Proportional Participatory Budgeting with Additive Utilities},
  booktitle = {Proceedings of the 35th International Conference on Neural Information Processing Systems},
  pages = {12726--12737},
  year = {2021},
}

@inproceedings{brill2023robust,
  author = {Markus Brill and Jannik Peters},
  title = {Robust and Verifiable Proportionality Axioms for Multiwinner Voting},
  booktitle = {Proceedings of the 24th ACM Conference on Economics and Computation},
  pages = {301},
  year = {2023},
}

@article{bulteau2021justified,
  author = {Laurent Bulteau and Noam Hazon and Rutvik Page and Ariel Rosenfeld and Nimrod Talmon},
  title = {Justified Representation for Perpetual Voting},
  journal = {IEEE Access},
  volume = {9},
  pages = {96598--96612},
  year = {2021}
}

@article{chandak2026proportional,
  author = {Nikhil Chandak and Shashwat Goel and Dominik Peters},
  title = {Proportional Aggregation of Preferences for Sequential Decision Making},
  journal = {Journal of Artificial Intelligence Research},
  volume = {85},
  year = {2026},
}

@inproceedings{phillips2025strengthening,
  author = {Bradley Phillips and Edith Elkind and Nicholas Teh
            and Tomasz W{\k{a}}s},
  title = {Strengthening Proportionality in Temporal Voting},
  booktitle = {Proceedings of the 25th International Conference on Autonomous Agents and Multiagent Systems},
  pages = {3823--3832},
  year = {2026},
  note = {Full version: arXiv:2505.22513},
}

@article{frank2026polynomial,
  author = {Fabian Frank and Jannik Peters},
  title = {A Polynomial-Time Rule Satisfying Full Justified Representation},
  year = {2026},
  journal = {arXiv preprint arXiv:2608.05397},
}

@article{neoh2026closinggaps,
  author = {Tzeh Yuan Neoh and Nicholas Teh},
  title = {Closing Gaps in Online Fair Division},
  year = {2026},
  journal = {arXiv preprint arXiv:2609.05310},
}

@article{ai2026full,
  author = {Yizhou Ai},
  title = {Full Justified Representation under {Hare} and {Droop} Quotas in Polynomial Time},
  year = {2026},
  journal = {arXiv preprint arXiv:2608.05417},
}

@article{teh2026strengthening,
  author = {Nicholas Teh},
  title = {Strengthening Full Justified Representation: Efficient Verification and Computation},
  year = {2026},
  journal = {arXiv preprint arXiv:2608.11500},
}

@article{neoh2026strengthening,
  author = {Tzeh Yuan Neoh and Nicholas Teh},
  title = {Strengthening Proportionality in Participatory Budgeting with Additive Utilities},
  year = {2026},
  journal = {arXiv preprint arXiv:2609.08888},
}

@article{aziz2017justified,
  author = {Haris Aziz and Markus Brill and Vincent Conitzer and
            Edith Elkind and Rupert Freeman and Toby Walsh},
  title = {Justified Representation in Approval-Based Committee Voting},
  journal = {Social Choice and Welfare},
  volume = {48},
  pages = {461--485},
  year = {2017},
}

@inproceedings{aziz2018complexity,
  author = {Haris Aziz and Edith Elkind and Shenwei Huang and
            Martin Lackner and Luis S{\'a}nchez-Fern{\'a}ndez and
            Piotr Skowron},
  title = {On the Complexity of Extended and Proportional Justified Representation},
  booktitle = {Proceedings of the 32nd AAAI Conference on Artificial Intelligence (AAAI)},
  pages = {902--909},
  year = {2018},
}

@article{brill2024approval,
  author = {Markus Brill and Paul G{\"o}lz and Dominik Peters and
            Ulrike Schmidt-Kraepelin and Kai Wilker},
  title = {Approval-Based Apportionment},
  journal = {Mathematical Programming},
  volume = {203},
  pages = {77--105},
  year = {2024},
}

@inproceedings{conitzer2017fair,
  author = {Vincent Conitzer and Rupert Freeman and Nisarg Shah},
  title = {Fair Public Decision Making},
  booktitle = {Proceedings of the 18th ACM Conference on Economics and Computation},
  pages = {629--646},
  year = {2017},
}

@inproceedings{elkind2025verifying,
  author = {Edith Elkind and Svetlana Obraztsova and Jannik Peters
            and Nicholas Teh},
  title = {Verifying Proportionality in Temporal Voting},
  booktitle = {Proceedings of the 39th AAAI Conference on Artificial Intelligence (AAAI)},
  pages = {13805--13813},
  year = {2025},
}

@article{golz2026approval,
  author = {Paul G{\"o}lz and Hannane Yaghoubizade},
  title = {Approval-Based Apportionment: Like Portioning,
           Approximately like Committee Voting},
  year = {2026},
  journal = {arXiv preprint arXiv:2608.27605},
}

@inproceedings{kraiczy2024lower,
  author = {Sonja Kraiczy and Edith Elkind},
  title = {A Lower Bound for Local Search Proportional Approval Voting},
  booktitle = {Proceedings of the 32nd Annual European Symposium on Algorithms (ESA)},
  pages = {82:1--82:14},
  year = {2024},
}

@inproceedings{lackner2020perpetual,
  author = {Martin Lackner},
  title = {Perpetual Voting: Fairness in Long-Term Decision Making},
  booktitle = {Proceedings of the 34th AAAI Conference on Artificial Intelligence (AAAI)},
  pages = {2103--2110},
  year = {2020},
}

@inproceedings{lackner2023proportional,
  author = {Martin Lackner and Jan Maly},
  title = {Proportional Decisions in Perpetual Voting},
  booktitle = {Proceedings of the 37th AAAI Conference on Artificial Intelligence (AAAI)},
  pages = {5722--5729},
  year = {2023},
}

@inproceedings{elkind2024temporalelections,
    author = {Edith Elkind and Tzeh Yuan Neoh and Nicholas Teh},
    title = {Temporal Elections: Welfare, Strategyproofness, and Proportionality},
    year = {2024},
    booktitle = {Proceedings of the 27th European Conference on Artificial Intelligence (ECAI)},
    pages = {3292--3299},
}

@inproceedings{elkind2022temporalslot,
  title     = {Fairness in Temporal Slot Assignment},
  author    = {Edith Elkind and Sonja Kraiczy and Nicholas Teh},
  booktitle = {Proceedings of the 15th International Symposium on Algorithmic Game Theory (SAGT)},
  pages     = {490--507},
  year      = {2022}
}

@inproceedings{elkind2025temporalchores,
    author = {Edith Elkind and Tzeh Yuan Neoh and Nicholas Teh},
    title = {Not in My Backyard! {T}emporal Voting Over Public Chores},
    year = {2025},
    booktitle = {Proceedings of the 34th International Joint Conference on Artificial Intelligence (IJCAI)},
    pages = {3814--3820}
}

@inproceedings{teh2026price,
    author = {Nicholas Teh},
    title = {The Price of Proportional Representation in Temporal Voting},
    year = {2026},
    pages = {3541--3549},
    booktitle = {Proceedings of the 35th International Joint Conference on Artificial Intelligence (IJCAI)}
}

@inproceedings{choo2026approxproponline,
  author    = {Davin Choo and Winston Fu and Derek Khu and Tzeh Yuan Neoh and Tze-Yang Poon and Nicholas Teh},
  title     = {Approximate Proportionality in Online Fair Division},
  booktitle = {Proceedings of the 43rd International Conference on Machine Learning (ICML)},
  year      = {2026},
  note      = {Extended version available as arXiv:2508.03253}
}

@inproceedings{neoh2026online,
  author    = {Tzeh Yuan Neoh and Jannik Peters and Nicholas Teh},
  title     = {Online Fair Division with Additional Information},
  booktitle = {Proceedings of the 43rd International Conference on Machine Learning (ICML)},
  year      = {2026},
  note      = {Extended version available as arXiv:2505.24503}
}

@inproceedings{elkind2024temporal,
  author       = {Edith Elkind and
                  Svetlana Obraztsova and
                  Nicholas Teh},
  title        = {Temporal Fairness in Multiwinner Voting},
booktitle = {Proceedings of the 38th AAAI Conference on Artificial Intelligence (AAAI)},
    year      = {2024},
    pages     = {22633--22640},
}

@inproceedings{zech2024multiwinnerchange,
    author = {Valentin Zech and Niclas Boehmer and Edith Elkind and Nicholas Teh},
    title = {Multiwinner Temporal Voting with Aversion to Change},
    year = {2024},
    booktitle = {Proceedings of the 27th European Conference on Artificial Intelligence (ECAI)},
    pages        = {3236--3243},
}

@article{cohen2026budgetconstraints,
  author  = {Saar Cohen and Nicholas Teh and Paul W. Goldberg and Michael J. Wooldridge},
  title   = {Online Fair Division with Budget Constraints},
  journal = {arXiv preprint arXiv:2607.23310},
  year    = {2026}
}

@inproceedings{Goldberg2026,
   author = {Paul Goldberg and Isaac Robinson and Nicholas Teh},
   booktitle = {Proceedings of the 19th International Symposium on Algorithmic Game Theory (SAGT)},
   title = {Minimizing Cumulative Envy in Allocating a Sequence of Items},
   year = {2026},
   note = {Forthcoming}
}

@inproceedings{alouf2022better,
  title     = {Better Collective Decisions via Uncertainty Reduction},
  author    = {Alouf-Heffetz, Shiri and Bulteau, Laurent and Elkind, Edith and Talmon, Nimrod and Teh, Nicholas},
  booktitle = {Proceedings of the 31st International Joint Conference on
               Artificial Intelligence (IJCAI)},
  pages     = {24--30},
  year      = {2022}
}

@inproceedings{elkind2025temporal,
  title={Temporal Fair Division of Indivisible Items},
  author={Elkind, Edith and Lam, Alexander and Latifian, Mohamad and Neoh, Tzeh Yuan and Teh, Nicholas},
  booktitle={Proceedings of the 24th International Conference on Autonomous Agents and Multiagent Systems (AAMAS)},
  pages={676--685},
  year={2025}
}

@article{choi2026tfdmm,
      author={Kui-Wang Choi and Minming Li and Nicholas Teh},
      title={Temporal Fair Division of Indivisible Mixed Manna: Tractable Settings}, 
    journal       = {arXiv preprint },
    year          = {2026},
    publisher     = {},
    volume        = {arXiv:2608.20033}
}

@article{becker2026existencecoreapprovalbasedcommittee,
      title={Existence of the Core in Approval-Based Committee Elections}, 
      author={Patrick Becker and Matthias Greger and Dominik Peters},
      year={2026},
    journal       = {arXiv preprint },
    volume        = {arXiv:2609.11912}
}

@inproceedings{masarik2024generalised,
  author = {Tom{\'a}{\v s} Masa{\v r}{\'i}k and
            Grzegorz Pierczy{\'n}ski and Piotr Skowron},
  title = {A Generalised Theory of Proportionality in Collective Decision Making},
  booktitle = {Proceedings of the 25th ACM Conference on Economics and Computation (EC)},
  pages = {734--754},
  year = {2024},
}

@inproceedings{skowron2022public,
  author = {Piotr Skowron and Adrian G{\'o}recki},
  title = {Proportional Public Decisions},
  booktitle = {Proceedings of the 36th AAAI Conference on Artificial Intelligence (AAAI)},
  pages = {5191--5198},
  year = {2022},
}

\clearpage
\appendix

\section{Harmonic Local Optima at Higher Satisfaction Levels}
\label{app:pav-family}

The following construction extends \cref{ex:pav} to every positive
satisfaction level. In each election of the family, no single-round
change increases the PAV score $\Phi$, yet the outcome violates FJR.

\begin{proposition}
\label[proposition]{prop:pav-family}
For every integer $k\geq1$, there is a temporal election with
$n=\ell=2k+1$ and two candidates $a,b$, in which each voter approves
exactly one candidate per round, with the following properties.
The outcome selecting $a$ in every round gives every voter
satisfaction $k$, and every replacement of $a$ by $b$ leaves
$\Phi$ unchanged. Nevertheless, it violates FJR, and each such
replacement increases the modified score $\Psi$ by $\delta$.
\end{proposition}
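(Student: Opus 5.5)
We generalize \cref{ex:pav} by a cyclic construction. Let $n=\ell=2k+1$, identify voters and rounds with $\mathbb{Z}_{2k+1}$, and in round $r$ let the $k$ voters $r,r+1,\ldots,r+k-1$ (indices mod $2k+1$) approve only $a$. The remaining $k+1$ voters approve only $b$. For $k=1$ this is exactly \cref{ex:pav}. By construction each voter approves exactly one candidate per round. Each voter $i$ approves $a$ in exactly $k$ rounds, namely $r=i-k+1,\ldots,i$. Hence the all-$a$ outcome gives every voter satisfaction $k$.

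Next we verify that $\Phi$ is unchanged by every single-round replacement. Fix a round $r$ and replace $a$ by $b$ there; no other replacement exists, since there are only two candidates. The $k$ voters approving $a$ in round $r$ drop from satisfaction $k$ to $k-1$, so $\Phi$ loses $k\cdot\frac1k=1$. The $k+1$ voters approving $b$ rise from $k$ to $k+1$, so $\Phi$ gains $(k+1)\cdot\frac1{k+1}=1$. The net change in $\Phi$ is therefore zero.

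For $\Psi$, note that the minimum satisfaction was $k$ and becomes $k-1$, because round $r$ has $k\ge1$ voters who lose an approval. By \eqref{eq:potential}, $\Psi$ therefore increases by exactly $\delta$.

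For the FJR violation we take $S=N$ and $T=[\ell]$. Then $\lfloor |T||S|/n\rfloor=\ell$, so the only admissible $R$ in $\mu_N([\ell])$ is $[\ell]$ itself. Choosing $\mathbf{x}=(b,\ldots,b)$ gives each voter satisfaction $\ell-k=k+1$, since each voter approves $b$ in the $2k+1-k$ rounds where she does not approve $a$. Thus $\mu_N([\ell])\ge k+1$, while every voter has satisfaction $k$ under the all-$a$ outcome, so FJR fails.

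I expect no real obstacle. The only care needed is in the design itself: the cyclic window must give each voter exactly $k$ approvals of $a$, and each round must split $k$ versus $k+1$. This split is precisely what makes the harmonic loss $k/k$ equal the harmonic gain $(k+1)/(k+1)$.
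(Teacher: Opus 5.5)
Your proposal is correct and follows the paper's proof essentially step for step. It uses the same cyclic window construction, the same cancellation of the harmonic loss $k\cdot\frac1k$ against the gain $(k+1)\cdot\frac1{k+1}$, the same drop of the minimum from $k$ to $k-1$, and the same FJR witness $S=N$, $T=[\ell]$ with $\mathbf{x}=(b,\ldots,b)$. The paper additionally shows $\mu_N([\ell])=k+1$ exactly, but your lower bound $\mu_N([\ell])\geq k+1$ already suffices for the violation.
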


\begin{proof}
In round $r$, let the $k$ voters $r,r+1,\ldots,r+k-1$ approve $a$,
with indices taken cyclically, so that voter $1$ follows voter $2k+1$;
the other $k+1$ voters approve $b$. Each voter approves $a$ in exactly
$k$ rounds, so the outcome $\mathbf{a}=(a,\ldots,a)$ gives every voter
satisfaction $k$. Changing any round from $a$ to $b$ removes one
approval from each of its $k$ current approvers and adds one approval
for each of the other $k+1$ voters. The PAV score therefore changes by
exactly
\[
(k+1)\frac{1}{k+1}-k\frac{1}{k}=0,
\]
and no single-round replacement increases it.

Selecting $b$ in every round gives every voter satisfaction $k+1$. For
$S=N$ and $T=[\ell]$, the only subset $R$ considered in the definition
of $\mu_N(T)$ is $T$ itself, so $\mu_N(T)\geq k+1$. Conversely,
$h_r(N)=k+1$ in every round, so every outcome has total satisfaction
at most $n(k+1)$, and its minimum satisfaction cannot exceed $k+1$.
Hence $\mu_N(T)=k+1$, and the outcome $\mathbf{a}$ violates FJR.

Finally, each replacement at $\mathbf{a}$ leaves $k$ voters at
satisfaction $k-1$ and $k+1$ voters at satisfaction $k+1$. Since
$k\geq1$, the minimum satisfaction falls from $k$ to $k-1$ while the
PAV score is unchanged, so $\Psi$ increases by $\delta$, which exceeds
$\varepsilon$.
\end{proof}

\section{Comparison with Multiwinner Voting}
\label{sec:multiwinner}
Temporal FJR+ and the multiwinner FJR+ axiom of
\citet{teh2026strengthening} both strengthen familiar representation
guarantees, and both admit polynomial-time computation and
verification. These shared properties, however, do not make the
definitions interchangeable. To compare them, we first associate with
each temporal election a multiwinner election with exactly the same
approvals and voter satisfactions. Under this correspondence, neither
axiom implies the other in general, but the two coincide when
preferences are static.

\subsection{The Multiwinner Axiom}

Following the notation of \citet{phillips2025strengthening}, we mark
multiwinner axioms with the superscript $\mathit{mw}$. In a
\emph{multiwinner election}, each voter $i\in N$ submits an approval
set $a_i\subseteq C$, and a \emph{committee} is a set $W\subseteq C$
of a prescribed size $k\in[|C|]$. Voter $i$'s satisfaction is
$\mathrm{sat}_i(W)=|a_i\cap W|$.

FJR$^\textit{\,mw}$ \citep{peters2021proportional} requires, for every
group $S\subseteq N$ and candidate set $T\subseteq C$ with
$|T|\leq k|S|/n$, that
\[
\max_{i\in S}\mathrm{sat}_i(W)
\geq \min_{i\in S}|a_i\cap T|.
\]
Thus a group compares its representation with what all of its members
could obtain from a candidate set within its proportional share.
EJR$^\textit{\,mw}$+ \citep{brill2023robust} requires that, for every
$\beta\in[k]$ and every unelected candidate $c\in C\setminus W$,
\begin{equation}
|\{i\in N:c\in a_i,\ \mathrm{sat}_i(W)<\beta\}|
<\frac{n\beta}{k}.
\label{eq:mw-ejrplus}
\end{equation}
In other words, if sufficiently many voters approve the same unelected
candidate, they cannot all remain below the corresponding satisfaction
level.

FJR$^\textit{\,mw}$+ allows voters and comparison candidates to carry
nonnegative weights. A group's size is then its total voter weight,
the size of its comparison set is its total candidate weight, and each
voter asks for the target satisfaction level multiplied by her weight.
We state the definition of \citet{teh2026strengthening} in our
satisfaction notation.

\begin{definition}[FJR$^\textit{\,mw}$+]
\label{def:mw-fjrplus}
A committee $W$ satisfies FJR$^\textit{\,mw}$+ if there is no
$\beta\in[k]$ and no choice of nonnegative weights $z_i$ for voters
and $y_c$ for candidates, and nonnegative assigned amounts
$x_{ic}$, for $i\in N$ and $c\in C$, satisfying
\begin{align}
z_i&=0
&&\text{if }\mathrm{sat}_i(W)\geq\beta,
\nonumber\\
x_{ic}&=0
&&\text{if }c\notin a_i,
\nonumber\\
\sum_{c\in C}x_{ic}&=\beta z_i
&&\text{for every }i\in N,
\nonumber\\
x_{ic}&\leq y_c
&&\text{for every }i\in N,\ c\in C,
\nonumber\\
x_{iw}&\leq z_i
&&\text{for every }i\in N,\ w\in W,
\label{eq:mw-assignments}
\end{align}
together with
\begin{equation}
\sum_{i\in N}z_i>0,
\quad
\sum_{i\in N}z_i\geq\frac{n}{k}\sum_{c\in C}y_c.
\label{eq:mw-proportionality}
\end{equation}
\end{definition}

Here $x_{ic}$ is the amount of candidate $c$ assigned to voter $i$.
Only voters below the target $\beta$ may have positive weight, and each
such voter must receive $\beta$ units per unit of her weight from
candidates she approves. Each assignment is bounded by the candidate's
weight; this bound applies separately to each voter, just as an
elected candidate gives one unit of satisfaction to every voter who
approves it. The additional bound $x_{iw}\leq z_i$ applies only to
elected candidates, and limits the contribution of each approved
elected candidate to one unit of satisfaction per unit of voter
weight. Finally, \eqref{eq:mw-proportionality} requires the weighted
group to be large enough to justify the weighted comparison set. The
weights appear only in the definition: the selected committee remains
an ordinary set of $k$ candidates.

This axiom implies both FJR$^\textit{\,mw}$ and EJR$^\textit{\,mw}$+
\citep{teh2026strengthening}. In particular, an EJR$^\textit{\,mw}$+
violation by a group $S$ and an unelected candidate $c$ yields the
assignments $z_i=1$ and $x_{ic}=\beta$ for $i\in S$, with $y_c=\beta$
and all other variables zero. This implication relies on the absence
of an upper bound $x_{ic}\leq z_i$ for unelected candidates.

\subsection{Preserving Approvals and Satisfaction}

We use the candidate-round construction of
\citet{phillips2025strengthening}: given a temporal election
$E=(C,N,\ell,A)$, we make each candidate-round pair a distinct
multiwinner candidate. Formally, let
\begin{equation}
\widehat C=C\times[\ell],
\quad
\widehat a_i=\{(c,r):c\in a_{i,r}\},
\quad k=\ell.
\label{eq:mw-conversion}
\end{equation}
An outcome $\mathbf{o}$ then corresponds to the committee
\[
W_{\mathbf{o}}=\{(o_r,r):r\in[\ell]\}.
\]
This construction preserves satisfaction exactly, that is,
$|\widehat a_i\cap W_{\mathbf{o}}|=\mathrm{sat}_i(\mathbf{o})$. It
also handles repeated selections of the same original candidate, since
its occurrences in different rounds are distinct pairs.

The feasible choices, however, differ. A temporal outcome selects one
pair from each round, whereas an ordinary committee of size $\ell$ may
select several pairs from one round and none from another. This
difference can prevent every temporal outcome from satisfying
FJR$^\textit{\,mw}$+ in the constructed election. For example,
consider two voters and two rounds. In round~1, the voters approve
different candidates; in round~2, both approval sets are empty. Under
EJR$^\textit{\,mw}$+, each voter is entitled to one approved
candidate, but a temporal outcome can satisfy only one of them, while
a multiwinner committee can select both approved pairs from round~1.
Hence a multiwinner existence or computation result does not by itself
yield the corresponding temporal result.

The correspondence nevertheless transfers the earlier multiwinner
axioms to their temporal counterparts. First, if $W_{\mathbf{o}}$
satisfies FJR$^\textit{\,mw}$, then $\mathbf{o}$ satisfies temporal
FJR. To see this, fix a group $S$ and a set of rounds $T$, and choose
any $R\subseteq T$ with $|R|=\lfloor |T||S|/n\rfloor$. By the
definition of $\mu_S(T)$, some choice of candidates in $R$ gives every
member of $S$ satisfaction at least $\mu_S(T)$. The corresponding
candidate-round pairs form a comparison set of size
$|R|\leq\ell|S|/n$, so FJR$^\textit{\,mw}$ gives some member of $S$
satisfaction at least $\mu_S(T)$ in $W_{\mathbf{o}}$, and hence in
$\mathbf{o}$.

Second, if $W_{\mathbf{o}}$ satisfies EJR$^\textit{\,mw}$+, then
$\mathbf{o}$ satisfies temporal EJR+. Suppose that $\mathbf{o}$
violates temporal EJR+ for a $(\sigma,\tau)$-cohesive group $S$ and a
round $r$ in which all members of $S$ approve a common candidate $c$.
Since every voter meets a target of zero, the level
$\beta=\lfloor\sigma\tau/n\rfloor$ is positive, and all members of
$S$ have satisfaction below $\beta$. Moreover, $o_r$ is not approved
by every member of $S$, so the pair $(c,r)$ is unelected. Since
$n\beta\leq\sigma\tau\leq|S|\ell$, the pair $(c,r)$ has at least
$n\beta/\ell$ approvers below $\beta$, violating
\eqref{eq:mw-ejrplus}.

Consequently, FJR$^\textit{\,mw}$+ for $W_{\mathbf{o}}$ implies both
temporal FJR and temporal EJR+. As we show next, its relationship with
temporal FJR+ is different.

\subsection{Incomparability}

Throughout this subsection, we compare the two FJR+ axioms using the
construction \eqref{eq:mw-conversion}.

\begin{theorem}
\label{thm:mw-incomparable}
Temporal FJR+ does not imply FJR$^\textit{\,mw}$+ for
$W_{\mathbf{o}}$, and FJR$^\textit{\,mw}$+ for
$W_{\mathbf{o}}$ does not imply temporal FJR+.
Both statements hold even when every voter approves exactly one
candidate per round and the temporal outcome is Pareto efficient.
\end{theorem}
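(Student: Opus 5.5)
Both separations are by explicit small elections in which every voter approves exactly one candidate per round. For each election we fix an outcome $\mathbf{o}$, check that $\mathbf{o}$ is Pareto efficient, and evaluate both axioms on it. The temporal side is mechanical. By \cref{thm:verification}, $\mathbf{o}$ satisfies temporal FJR+ if and only if, for each $\beta\in[\ell]$, $\sum_r h_r(U_\beta)\le n\beta$, with equality only when $U_\beta\ne N$ and the positive counts are unequal. So only the sets $U_\beta$ and their round counts need computing. On the multiwinner side, a violation of FJR$^{\,mw}$+ is certified by exhibiting $\beta$ and weights $z,y,x$ satisfying \eqref{eq:mw-assignments} and \eqref{eq:mw-proportionality}. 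Satisfaction must be shown by LP duality or by a direct counting argument ruling out every such certificate. Pareto efficiency should be forced, as in \cref{ex:strict}, by a voter whose maximal satisfaction pins down the outcome, or by a short exhaustive argument.

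\emph{Temporal FJR+ without FJR$^{\,mw}$+.} The room to exploit is the boundary case $\sum_r h_r(S)=n\beta$ with $S\ne N$ and unequal positive counts, which temporal FJR+ deliberately tolerates. I would take a group $S\subsetneq N$ whose members all have satisfaction $0$, with round counts such as $(2,1,1)$ scaled so that the total equals $n$ exactly (so $\beta=1$). The certificate would put $z_i=1$ on $S$ and $y=1$ on a family of candidate-round pairs covering every member of $S$ once. With $k=\ell$, \eqref{eq:mw-proportionality} becomes $|S|\ge (n/\ell)\sum_c y_c$. The remaining voters are placed so that they receive all selected candidates and make $\mathbf{o}$ Pareto efficient. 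The care needed is to keep every $U_\beta$ within the inequalities of \cref{thm:verification}, not just $U_1$.

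\emph{FJR$^{\,mw}$+ without temporal FJR+.} Here I would exploit a feature of the multiwinner LP that has no temporal counterpart: the cap $x_{iw}\le z_i$ on elected candidates. I would also exploit the fact that \eqref{eq:mw-proportionality} charges for the whole weighted comparison set. Temporal FJR+ credits a group with its per-round maxima $h_r$ even when reaching them requires different candidates approved by overlapping members. The natural first attempt is a boundary case of condition~(ii) or~(iii). My checks show that the two-voter example after \cref{thm:implications} does not work: its multiwinner LP also finds a violation, namely $y_{(b,1)}=1$ plus one further unit per voter. So the instance must use more voters, arranged so that a group $U_\beta$ has total exactly $n\beta$ with equal positive counts (triggering condition~(ii)). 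At the same time, every fractional certificate must require total candidate weight exceeding $(\ell/n)\sum_i z_i$. This happens, for example, when the group's agreement in each round is split among voters who also need elected pairs, so the cap $x_{iw}\le z_i$ binds.

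The main obstacle is this second direction, specifically proving that \emph{no} fractional certificate exists. I would first try to write down a dual solution, namely prices on voters and candidate-round pairs, showing $\sum_i z_i<(n/\ell)\sum_c y_c$ for every feasible $(z,y,x)$ at every $\beta$. If that proves awkward, I would restrict the instance so that only one level $\beta$ is relevant, and argue by averaging over the few pairs each below-$\beta$ voter approves.
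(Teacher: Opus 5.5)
Your first direction is fine. It is essentially the paper's five-voter instance with $\mathbf{o}=(a,c_1,c_1)$, which even has slack ($\sum_r h_r(U_1)=4<5$), so the boundary case is not needed.

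The gap is in the second direction: the instance you aim for cannot exist. You want a $U_\beta$ with equal positive counts and total $n\beta$, that is, a violation of condition~(ii). Suppose all members of $S$ are below $\beta$ and $S$ is $(\sigma,\tau)$-cohesive with $\sigma\tau\ge n\beta$. In each of these $\tau$ rounds, let $c_r$ have at least $\sigma$ approvers in $S$.
\begin{itemize}
\item If $o_r\ne c_r$ in one of these rounds, then $(c_r,r)$ is an unelected pair with at least $\sigma\ge n\beta/\tau\ge n\beta/\ell$ approvers below $\beta$. This violates \eqref{eq:mw-ejrplus}, and hence FJR$^{\,mw}$+.
\item If $o_r=c_r$ in all of them, the members of $S$ collect at least $\sigma\tau\ge n\beta>|S|(\beta-1)$ approvals. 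So some member reaches $\beta$, a contradiction.
\end{itemize}
Thus every condition-(ii) violation already yields an EJR$^{\,mw}$+ violation for $W_{\mathbf{o}}$. Your failed two-voter check is an instance of this general fact, not an artefact of having too few voters. Adding voters cannot help, and the cap $x_{iw}\le z_i$ never comes into play, because the certificate uses a single unelected pair.

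A separating instance must therefore violate only condition~(i) or~(iii). Its round maxima must be unequal and attained by different, overlapping subgroups. No unelected pair may have $n\beta/\ell$ approvers below $\beta$, and the largest agreement should sit on an elected pair whose contribution is capped.

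The paper does this with condition~(i), using eleven voters and three rounds:
\begin{itemize}
\item voters $1,\ldots,7$ approve $a,b,b$;
\item voters $8,9$ approve $a,c_i,c_i$;
\item voters $10,11$ approve $c_i$ throughout.
\end{itemize}
The outcome $(a,c_{10},c_{11})$ gives everyone satisfaction one, while $\sum_r h_r(N)=23>2n$. The counting argument for $\beta\in\{2,3\}$ goes as follows.
\begin{itemize}
\item Voters $1,\ldots,7$ approve only three pairs, so $\beta\sum_{i\le7}z_i\le7\bigl(y_{(a,1)}+y_{(b,2)}+y_{(b,3)}\bigr)$.
\item Voters $8,9$ get at most $z_i$ from the elected pair $(a,1)$, so they need $(\beta-1)z_i$ from pairs approved by them alone.
\item Voters $10,11$ need $\beta z_i$ from pairs approved by them alone.
\end{itemize}
Adding these bounds over disjoint sets of pairs gives $\sum_c y_c\ge\frac{2}{7}\sum_i z_i>\frac{3}{11}\sum_i z_i$, which contradicts \eqref{eq:mw-proportionality}. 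This is exactly where your intuition about the cap on elected candidates does the work, but it has to be attached to condition~(i), not~(ii).
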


\begin{proof}
We give an example for each direction. Each table entry is the voter's
unique approved candidate in that round, and differently named
candidates are distinct within each example.

\emph{Temporal FJR+ does not imply FJR$^\textit{\,mw}$+.}
Consider five voters and three rounds:
\begin{center}
\begin{tabular}{cccc}
\toprule
Voter & Round 1 & Round 2 & Round 3\\
\midrule
1 & $a$ & $c_1$ & $c_1$\\
2 & $a$ & $c_2$ & $c_2$\\
3 & $a$ & $c_3$ & $c_3$\\
4 & $b$ & $c_4$ & $c_4$\\
5 & $b$ & $c_5$ & $c_5$\\
\bottomrule
\end{tabular}
\end{center}
The outcome $\mathbf{o}=(a,c_1,c_1)$ gives satisfactions
$(3,1,1,0,0)$. At level one, $U_1=\{4,5\}$; at levels two and three,
$U_2=U_3=\{2,3,4,5\}$. In each case,
\[
(h_1(U_\beta),h_2(U_\beta),h_3(U_\beta))=(2,1,1),
\]
whose sum is $4<5\beta$, so \cref{thm:verification} shows that
$\mathbf{o}$ satisfies temporal FJR+. In the multiwinner election,
however, voters~4 and~5 both approve the unelected candidate $(b,1)$
and have satisfaction zero. Their number exceeds $n/k=5/3$, which
violates EJR$^\textit{\,mw}$+ and hence FJR$^\textit{\,mw}$+. In
fact, the group $S=\{4,5\}$ and the set $T=\{(b,1)\}$ also violate
FJR$^\textit{\,mw}$. Finally, preserving voter~1's satisfaction of
three fixes the choice in every round, so $\mathbf{o}$ is Pareto
efficient.

Intuitively, voters~4 and~5 agree only in the first round, and
temporal FJR+ takes into account their lack of agreement in the later
rounds. In the multiwinner election, by contrast, their share of the
three committee seats entitles them to an approved candidate, even
though another candidate from the same round has already been
selected.

\emph{FJR$^\textit{\,mw}$+ does not imply temporal FJR+.}
Now consider eleven voters and three rounds:
\begin{center}
\begin{tabular}{cccc}
\toprule
Voters & Round 1 & Round 2 & Round 3\\
\midrule
$1,\ldots,7$ & $a$ & $b$ & $b$\\
8 & $a$ & $c_8$ & $c_8$\\
9 & $a$ & $c_9$ & $c_9$\\
10 & $c_{10}$ & $c_{10}$ & $c_{10}$\\
11 & $c_{11}$ & $c_{11}$ & $c_{11}$\\
\bottomrule
\end{tabular}
\end{center}
The outcome $\mathbf{o}=(a,c_{10},c_{11})$ gives every voter
satisfaction one. However,
\[
\sum_{r=1}^3 h_r(N)=9+7+7=23>22=2n,
\]
so condition~(i) of temporal FJR+ requires some voter to have
satisfaction at least two, and $\mathbf{o}$ violates the axiom.

We show that $W_{\mathbf{o}}$ nevertheless satisfies
FJR$^\textit{\,mw}$+ by ruling out the assignments in
\cref{def:mw-fjrplus}. At level one, every voter already has
satisfaction one, so all voter weights must be zero, contrary to
\eqref{eq:mw-proportionality}. Now fix $\beta\in\{2,3\}$. We bound the
total candidate weight from below by considering three disjoint sets of
candidate-round pairs. Voters $1,\ldots,7$ approve only the pairs
$(a,1)$, $(b,2)$, and $(b,3)$. Since each candidate weight bounds the
assignment to each of these seven voters, summing gives
\[
\beta\sum_{i=1}^7z_i
\leq7\bigl(y_{(a,1)}+y_{(b,2)}+y_{(b,3)}\bigr).
\]
For each $i\in\{8,9\}$, the elected pair $(a,1)$ contributes at most
$z_i$, so the weights of voter $i$'s other two approved pairs,
$(c_i,2)$ and $(c_i,3)$, total at least $(\beta-1)z_i$; these pairs
are approved only by voter $i$. Likewise, every pair approved by
voter~10 or~11 is approved by that voter alone, so these pairs require
total candidate weight at least $\beta(z_{10}+z_{11})$. As the three
bounds concern disjoint sets of pairs (the weight of $(a,1)$ is counted
only in the first), we may add them to obtain
\begin{align*}
\sum_{c\in\widehat C}y_c \geq \frac{\beta}{7}\sum_{i=1}^7z_i
 +(\beta-1)(z_8+z_9)+\beta(z_{10}+z_{11}) \geq \frac{2}{7}\sum_{i=1}^{11}z_i,
\end{align*}
where the second inequality uses $\beta\geq2$ and the nonnegativity of
the voter weights. But then \eqref{eq:mw-proportionality} would give
\[
\sum_{i=1}^{11}z_i
\geq\frac{11}{3}\sum_{c\in\widehat C}y_c
\geq\frac{22}{21}\sum_{i=1}^{11}z_i,
\]
which is impossible when the total voter weight is positive.

To see that $\mathbf{o}$ is Pareto efficient, first consider an
outcome that does not select $a$ in round~1. Each of its three choices
satisfies at most one of voters $8,9,10,11$, so one of these voters
has satisfaction zero. If $a$ is selected in round~1, then keeping
voters~10 and~11 at positive satisfaction requires the other two
choices to be $c_{10}$ and $c_{11}$, one each. Every voter then again
has satisfaction one, so no Pareto improvement exists.

In this example, temporal FJR+ requires some voter to reach the integer
levels justified by the electorate's best attainable average
satisfaction. The multiwinner axiom instead asks whether a weighted
group can give each of its members the target level within its
proportional share, and the calculation above shows that no such
weighted group exists, even though the temporal total exceeds $2n$.
\end{proof}

The two examples separate the definitions without relying on empty
ballots, Pareto improvements, or the boundary conditions of temporal
FJR+. In particular, FJR$^\textit{\,mw}$+ can hold even when
condition~(i) of temporal FJR+ is violated.

\subsection{Static Preferences}

With static preferences, temporal outcomes correspond to outcomes of
approval-based apportionment, in which a candidate may be selected
several times \citep{brill2024approval}. The candidate copies in
\eqref{eq:mw-conversion} then give the standard reduction from
apportionment to multiwinner voting. \citet{golz2026approval} show
that FJR, EJR, and EJR+ coincide in this setting. We extend this
picture to temporal FJR+ and multiwinner FJR+.

\begin{proposition}
\label{prop:mw-static}
For static preferences, an outcome $\mathbf{o}$ satisfies
temporal FJR+ if and only if $W_{\mathbf{o}}$ satisfies
FJR$^\textit{\,mw}$+ in the election of
\eqref{eq:mw-conversion}. This holds even if $\mathbf{o}$
selects the same original candidate more than once.
\end{proposition}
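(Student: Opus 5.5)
The plan is to reduce both axioms to one counting condition that uses the round-independence of $h_r$ under static preferences. With static preferences, voter $i$ approves a fixed set $a_i$ in every round. So for every group $S$ the count $h_r(S)=h(S):=\max_{c\in C}|\{i\in S:c\in a_i\}|$ does not depend on $r$, and $\sum_r h_r(S)=\ell\,h(S)$.

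\emph{Temporal side.} I will apply \cref{thm:verification}. Because all values $h_r(U_\beta)$ coincide, the positive values among them are always all equal. Equality in the test is therefore never permitted when the total is positive, and a zero total is harmless. Temporal FJR+ thus becomes the statement that, for every $\beta\in[\ell]$ and every $c\in C$,
\[
|\{i\in U_\beta:c\in a_i\}|<\frac{n\beta}{\ell}. \qquad (\ast)
\]

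\emph{FJR$^\textit{\,mw}$+ implies temporal FJR+.} Since FJR$^\textit{\,mw}$+ implies EJR$^\textit{\,mw}$+, it suffices to show that a violation of $(\ast)$ violates \eqref{eq:mw-ejrplus}. Suppose $(\ast)$ fails for some $\beta$ and $c$. The approvers of $c$ in $U_\beta$ then form a nonempty set, since $n\beta/\ell>0$. Each such voter approves $c$ in every round and has satisfaction below $\beta\leq\ell$. Hence $c$ is selected in fewer than $\ell$ rounds, so some pair $(c,r)$ is unelected in $W_{\mathbf{o}}$. This pair is approved by all of these voters, and with $k=\ell$ they number at least $n\beta/k$, which contradicts \eqref{eq:mw-ejrplus}. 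Repeated selections of $c$ cause no difficulty here, because we only need one round in which $c$ is not chosen.

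\emph{Temporal FJR+ implies FJR$^\textit{\,mw}$+.} Assume $(\ast)$ holds, and suppose some $\beta$ and weights $z,y,x$ satisfy \cref{def:mw-fjrplus}. Sum the equalities $\sum_{p}x_{ip}=\beta z_i$ over all voters. Then regroup the sum by candidate-round pair $p=(c,r)$. For each pair, $x_{ip}$ is positive only if $z_i>0$ and $p\in\widehat a_i$, which means $i\in U_\beta$ and $c\in a_i$. By $(\ast)$ there are fewer than $n\beta/\ell$ such voters, and each contributes $x_{ip}\leq y_p$. This gives
\[
\beta\sum_i z_i\leq\sum_p y_p\,|\{i\in U_\beta:c\in a_i\}|.
\]
If $\sum_p y_p=0$, then all $x_{ip}$ vanish, so $\beta\sum_i z_i=0$, which contradicts $\sum_i z_i>0$. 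Otherwise the right-hand side is strictly less than $(n\beta/\ell)\sum_p y_p$. This yields $\sum_i z_i<(n/k)\sum_p y_p$, contradicting \eqref{eq:mw-proportionality}. Notably, this direction never needs the elected-candidate bound $x_{iw}\leq z_i$.

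The step I expect to need most care is the first reduction. I must check that the boundary conditions (ii) and (iii) of FJR+ collapse exactly to the strict inequality $(\ast)$ under static preferences. This includes the cases $U_\beta=N$ and $h(U_\beta)=0$, and it is precisely what makes the temporal strict inequality match the strict inequality in \eqref{eq:mw-ejrplus}.
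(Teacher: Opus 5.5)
Your proposal is correct and follows essentially the same route as the paper. You reduce temporal FJR+ via \cref{thm:verification} to the strict inequality $\ell\,h(U_\beta)<n\beta$; you bound $\beta\sum_i z_i$ by per-pair approval counts times candidate weights to obtain FJR$^\textit{\,mw}$+; and, conversely, you find an unelected copy $(c,r)$ of a maximally approved candidate that violates EJR$^\textit{\,mw}$+. The only cosmetic difference is that the paper uses the uniform bound $h_1(U_\beta)$ and divides by $\sum_i z_i>0$, which avoids your separate case $\sum_p y_p=0$.
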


\begin{proof}
Fix a level $\beta\in[\ell]$. Since preferences are static,
$h_r(U_\beta)=h_1(U_\beta)$ in every round. As all positive round
values are then equal, equality with a positive total is excluded, so
by \cref{thm:verification}, temporal FJR+ holds exactly when
\begin{equation}
\ell h_1(U_\beta)<n\beta
\label{eq:mw-static}
\end{equation}
at every level.

Suppose first that temporal FJR+ holds, and consider assignments that
would violate FJR$^\textit{\,mw}$+ at level $\beta$. Voters outside
$U_\beta$ have $z_i=0$; since their nonnegative assignments sum to
$\beta z_i=0$, all of these assignments are zero. Every
candidate-round pair is approved by at most $h_1(U_\beta)$ voters in
$U_\beta$, so
\[
\beta\sum_{i\in N}z_i
=\sum_{c\in\widehat C}\sum_{i\in U_\beta}x_{ic}
\leq h_1(U_\beta)\sum_{c\in\widehat C}y_c.
\]
By \eqref{eq:mw-proportionality},
$\sum_{c\in\widehat C}y_c\leq(\ell/n)\sum_{i\in N}z_i$. Since the
total voter weight is positive, the displayed inequality implies
$n\beta\leq\ell h_1(U_\beta)$, contradicting \eqref{eq:mw-static}.

Conversely, suppose that temporal FJR+ is violated at level $\beta$,
so that $h_1(U_\beta)\geq n\beta/\ell$. Choose an original candidate
$c$ approved by $h_1(U_\beta)$ voters in $U_\beta$. At least one copy
$(c,r)$ of $c$ is unelected, since selecting all $\ell$ copies would
give these voters satisfaction $\ell\geq\beta$. This unelected copy
has at least $n\beta/\ell$ approvers below $\beta$, which violates
EJR$^\textit{\,mw}$+ and therefore FJR$^\textit{\,mw}$+.
\end{proof}

The equivalence relies on using a separate candidate for each round.
An alternative comparison keeps the original candidate set $C$: with
static preferences, an outcome selecting $\ell$ distinct candidates
gives an ordinary committee of size $\ell$ with the same
satisfactions. Temporal FJR+ still implies FJR$^\textit{\,mw}$+ under
this comparison, since the first half of the proof applies with $C$
in place of $\widehat C$. The converse, however, can fail. Consider
one voter, two rounds, and $C=\{a,b\}$, and let the voter's static
approval set be $\{a\}$. The committee $\{a,b\}$ satisfies
FJR$^\textit{\,mw}$+: at level one, no voter is below the target, and
at level two, the only approved candidate is elected, so
$2z_1=x_{1a}\leq z_1$ forces $z_1=0$. Yet the temporal outcome $(a,b)$
violates FJR+, since selecting $a$ twice would give satisfaction two.
Keeping the original candidates thus compares a model that permits
repeated selections with one that allows each candidate only once.
Even for static preferences, one must therefore specify which
comparison is intended.

\end{document}